\newtheorem{proposition}{Proposition}
\newtheorem{remark}{Remark}
\def\phi{\varphi}
\def\({\left(}
\def\){\right)}
\def\b0{{\mathbf{0}}}
\newcommand{\diag}{\mathrm{diag}}
\def\W{\mathbf{W}}
\def\H{\mathbf{H}}
\def\U{\mathbf{U}}
\def\h{\mathbf{h}}
\def\V{\mathbf{V}}
\def\v{\mathbf{v}}
\def\Q{\mathbf{Q}}
\def\q{\mathbf{q}}
\def\t{\mathbf{t}}
\def\T{\mathbf{T}}
\def\bf#1{\mathbf{#1}}
\def\P{\mathscr{P}}
\begin{document}	
	\title{\huge A Deep Learning-Based Framework for Low Complexity Multi-User MIMO Precoding Design}
	\author{Maojun Zhang, Jiabao Gao, Caijun Zhong
	\thanks{\setlength{\baselineskip}{13pt} \noindent
	M. Zhang, J. Gao, and C. Zhong are with the College of Information Science and Electronic Engineering, Zhejiang University, Hangzhou, China (Email: $\{$zhmj, gao\_jiabao, caijunzhong$\}$@zju.edu.cn.)
	}	
	}
	\maketitle
	\begin{abstract}
		Using precoding to suppress multi-user interference is a well-known technique to improve spectra efficiency in {multiuser multiple-input multiple-output} (MU-MIMO) systems, and the pursuit of high performance and low complexity precoding method has been the focus in the last decade. 
		The traditional algorithms including the zero-forcing (ZF) algorithm and the weighted minimum mean square error (WMMSE) algorithm failed to achieve a satisfactory trade-off between complexity and performance. 
		In this paper, leveraging on the power of deep learning, we propose a low-complexity precoding design framework for MU-MIMO systems. 
		{The key idea is to transform the MIMO precoding problem into the multiple-input single-output precoding problem, where the optimal precoding structure can be obtained in closed-form.} 
		A customized deep neural network is designed to fit the mapping from the channels to the precoding matrix. In addition, the technique of input dimensionality reduction, network pruning, and recovery module compression are used to further improve the computational efficiency. 
		Furthermore, the extension to the practical MIMO orthogonal frequency-division multiplexing (MIMO-OFDM) system is studied. Simulation results show that the proposed low-complexity precoding scheme achieves similar performance as the WMMSE algorithm with very low computational complexity. 
	\end{abstract}
		\begin{IEEEkeywords}
			Multi-user MIMO, precoder design, deep learning, model compression, MIMO-OFDM
		\end{IEEEkeywords}

	\newpage
	\section{Introduction}
	The multiple-input multiple-output (MIMO) antenna technique can dramatically improve the spectral efficiency by exploiting spatial domain resources, and has become a key enabling technology for the next generation wireless communication systems. Combining with multi-user spatial multiplexing, multi-user MIMO (MU-MIMO) can further increase the spectral efficiency and user capacity, which is crucial for supporting the massive access scenario in the future.
	To fully realize the potential of MU-MIMO, it is essential to preform precoding to suppress the interuser interference.
	
	Many methods have appeared in the literature to tackle the precoding problem in MU-MIMO systems. Specifically, for the classic weighted sum rate (WSR) maximization problem, two different types of solution have been proposed, namely, optimization-based iterative algorithm and intuition-based non-iterative algorithm. The weighted minimum mean square error (WMMSE) algorithm is one of the most popular iterative algorithms \cite{christensen2008weighted,shi2011iteratively}. The original WMMSE algorithm is further improved in \cite{hu2020iterative,lu2020learning}. In general, the WMMSE algorithm can achieve near optimal sum rate performance. Yet, due to slow convergence rate and high-dimensional matrix inversion in each iteration, the WMMSE algorithm suffers from high computational complexity.
	Another feasible way is to solve the WSR problem heuristically,
	such as 
	the zero-forcing (ZF) algorithm.
	Since the heuristic algorithms enjoy a closed-form solution and thus have much lower computational complexity.
	The performance of ZF-based algorithm under different system settings has been studied in \cite{caire2003achievable,spencer2004zero,yoo2006optimality,sun2010eigen,peel2005vector,lu2020block,zhang2018performance}.
	Since the ZF algorithm ignores the transmission noise and has a restrict solution space, its performance is unsatisfactory in the low {signal-to-noise-ratio} (SNR) regime or the fully loaded scenarios.
	In conclusion, the existing solutions suffer from either high computational complexity or unsatisfactory performance. 

%

	Recently, deep learning (DL) has demonstrated its superiority in the fields of computer vision and natural language processing, which motivates its application in the area of wireless communications. For instance, DL has been used to solve many classic communication problems, such as resource management \cite{sun2018learning,ye2019deep,liang2019spectrum,liang2019deep,eisen2019learning}, signal detection \cite{ye2017power,samuel2019learning,he2020model}, channel estimation \cite{he2018deep,dong2019deep,ma2020data,yang2020deep,gao2021attention} and beamforming \cite{gao2020unsupervised,song2020unsupervised,pellaco2020deep,hu2020iterative,huang2018unsupervised,xia2019deep,kim2020deep,lu2020learning}.
	The main advantage of DL is that, with rich training samples, the intensive computing task is transferred to the offline training stage,
	 only simple forward computation is needed in the online predicting stage. As such, DL provides a new way to design low complexity precoding algorithm. The DL-based design can be classified into two different types, namely model-driven design and data-driven design \cite{he2019model,zappone2019wireless}.
	The model-driven method tries to construct the neural network by leveraging on the prior expert knowledge.
	One of the most popular model-driven approaches is ``deep unfolding", which unfolds the existing iterative algorithm and embeds some trainable parameters to improve the performance.
	The unfolding design of WMMSE algorithm for precoding has been discussed in \cite{pellaco2020deep,hu2020iterative}. Specifically, the authors in \cite{pellaco2020deep} first proposed the idea of using deep unfolding of the WMMSE algorithm for MISO downlink precoding. Different variations of the deep unfolding idea have been proposed in \cite{shi2011iteratively,hu2020iterative}. Generally speaking, the deep unfolding schemes could achieve decent performance when compared with the original iterative algorithm. However, since the deep unfolding method retains the iterative process, the computational complexity is still high. On the other hand, the data-driven precoding algorithm has low computational complexity \cite{huang2018unsupervised}, albeit with significant performance degradation. 	
	In most cases, it is more efficient to predict the key features by network and then recover the solution of the original problem through a recovery module.
	Aware of this, the authors in  \cite{hu2020iterative} and \cite{lu2020learning} proposed to take the intermediate iterative variables in WMMSE algorithm as the network output, and then recover the final precoding matrix through a recovery module.
	However, even with the recovery module designed in \cite{lu2020learning}, there is still noticeable performance gap with the WMMSE algorithm.
	
	To realize high-performance data-driven precoding algorithm, the key is to simplify the learning task of network, which can be achieved by fully exploiting the existing expert knowledge.
	Most recently, for the multi-input single-output (MISO) case, capitalizing on the optimal solution structure derived in \cite{yu2007transmitter,bjornson2014optimal},
	the authors in \cite{xia2019deep,kim2020deep} used convolutional neural network (CNN) to fit the mapping between two power allocation vectors and the channel matrix. Motivated by this, in the current paper, we first propose to transform the original MIMO precoding problem into a MISO precoding problem, and then design a low complexity DL-based scheme for the WSR maximization problem in MIMO systems. The extension to the practical MIMO orthogonal frequency-division multiplexing (MIMO-OFDM) system is also discussed.
	The contributions of the paper are summarized as follows:
	\begin{itemize}
		\item{We propose a DL-based low complexity downlink precoding framework for MU-MIMO systems. Capitalizing on the optimal precoder structure of the MISO systems, an approximated optimal precoder structure of the MIMO systems is presented.}
\item{The idea of input dimensionality reduction, network pruning and recovery module compression is proposed, which not only significantly improves the computational efficiency, but also provides a simple way to balance the system performance and computational complexity.}
		\item {
		We extend the proposed scheme to the practical MIMO-OFDM scenario where multiple adjacent resource blocks (RB) share a common precoding matrix. Based on the newly derived optimal solution structure, a dedicated neural network is proposed to learn the mapping from the channel to the key features.}
		\item {Simulation results show that the proposed scheme can achieve similar performance as the WMMSE algorithm, but with much lower computational complexity. In addition, it was demonstrated that the proposed scheme has strong generalization ability, and can adapt to various scenarios with different number of user streams, varying number of users, or imperfect channel state information (CSI). }
	\end{itemize}

	The rest of this paper is organized as follows. Section II introduces the system model and problem formulation.
	Section III gives the method of transforming the MIMO precoding problem into MISO precoding problem. Based on the optimal solution structure, the novel learning based low complexity precoding design is presented in Section IV.
	Then, Section V discusses the extension to MIMO-OFDM systems. Extensive simulation results are presented in Section VI.
	Finally, Section VII concludes the paper.

	\emph{Notations:} We use italic, bold-face lower-case and bold-face upper-case letter to denote scalar, vector and matrix, respectively.
	$\mathbf{A}^T$ and $\mathbf{A}^H$ denote the transpose and Hermitian or complex conjugate transpose of matrix $\mathbf{A}$, respectively.
	$\mathbf{A}^{-1}$ denotes the inversion of matrix $\mathbf{A}$.
	$\mathbb{E}\{\cdot\}$ denotes the statistical expectations.
	${\rm Tr}\{\cdot\}$ denotes the trace operation.
	$\|\mathbf{x}\|$ denotes the $l$-2 norm of vector $\mathbf{x}$. $\mathbb{C}^{x\times y}$ denotes the $x\times y$ complex space.
	$\mathcal{CN}(\mu,\sigma^2)$ denotes the distribution of a circularly symmetric complex Gaussian random variable with mean $\mu$ and variance $\sigma^2$.
	$\mathcal{U}\left[a,b\right]$ denotes the uniform distribution between $a$ and $b$.

	




	\section{System Model and Problem Formulation}
	\subsection{System Model}\label{subsec:system model}
		We consider a single cell downlink MU-MIMO system,
	where the BS is equipped with $N_t$ transmit antennas and serves $K$ users each with $N_r$ antennas.
	The received signal at the $k$-th user $\mathbf{y}_k\in \mathbb{C}^{N_r\times 1}$ is given by
	\begin{align}\label{eq:trceived signal}
		\mathbf{y}_k =\mathbf{H}_k\mathbf{V}_k \mathbf{s}_k+\sum_{m=1,m\neq k}^{K}\mathbf{H}_k\mathbf{V}_m \mathbf{s}_m+\mathbf{n}_k, \forall k \in \mathcal{K}
	\end{align}	
	where $\H_k \in \mathbb{C}^{N_r \times N_t}$ denotes the channel matrix from the BS to user $k$,
	$\mathbf{V}_k\in\mathbb{C}^{N_t\times d_k}$ denotes the precoding matrix for user $k$,
	$d_k$ is the number of data streams,
	$\mathbf{n}_k \in \mathbb{C}^{N_r \times 1}$ denotes the additive white Gaussian noise with distribution $\mathcal{CN} (0,\sigma^2 \mathbf{I})$,
	$\mathbf{s}_k\in\mathbb{C}^{d_k\times 1}$ denotes the transmitted data for user $k$ that satisfies $\mathbb{E}[\mathbf{s}_k \mathbf{s}_k^H]=\mathbf{I}$, and it is assumed that the transmitted data between different users is independent, $\mathcal{K}=\left\{1,2,...,K\right\}$ denotes the set of users.



	\subsection{Problem Formulation}\label{subsec:problem formulation}

	For the downlink precoding design, we consider to maximize the weighted sum rate (WSR) under total transmit power constraint, the problem is given below
	\begin{align}\label{pr:sum rate max P1}
		\mathscr{P}_1: \max_{\left\{\V_k\right\}}&\sum_{k=1}^{K}\alpha_k R_k,\\
		\mathbf{s.t.}\ &\sum_{k=1}^{K} {\rm Tr} (\mathbf{V}_k\mathbf{V}_k^H)\leq P_T, \notag
	\end{align}	
	where $P_T$ denotes the transmit power budget of the BS, $\alpha_k$ is a weighted scalar which represents the priority of user $k$.
	The achievable rate of user $k$, $R_k$ is given by
	\begin{align}\label{eq: R def in P1}
		R_k = \log \det\left(\bf{I} + \H_k \V_k \V_k^H \H_k^H \left(\sum_{m\neq k}\H_k \V_m \V_m^H \H_k^H +\sigma^2 \bf{I}\right)^{-1} \right).
	\end{align}

	It is obvious that $\mathscr{P}_1$ is nonconvex, and thus hard to solve.
	In prior works, an iterative WMMSE algorithm has been proposed to find the optimal solution.
	Yet, it incurs high computational cost and substantial delays, which makes it difficult to implement in practice.
	In the rest of the paper, leveraging on the power of DL technique,
	we propose a
	DL-based method which achieves comparable performance as the WMMSE algorithm with much lower complexity.
	
	\vspace{-2mm}
	\section{Problem Transformation}
	In this section,
	we start with the special MISO case, and present the optimal structure of the downlink precoding matrix.
	Then by transforming the MIMO precoding problem into the MISO precoding problem, the optimal structure in the general MIMO scenario is designed.
	\vspace{-3mm}
	\subsection{Optimal Solution Structure}
	We first introduce the principles of the well known WMMSE algorithm \cite{hu2020iterative,shi2011iteratively}.
	It repeats the following three steps until convergence:
	\begin{align} \label{eq:wmmse iterative update}
		\mathbf{U}_k&=\left(\frac{\sigma^2}{P_T}\sum_{k=1}^K {\rm Tr}\left(\mathbf{V}_k\mathbf{V}_k^H\right)\mathbf{I}_{N_r}+\sum_{m=1}^K\mathbf{H}_k\mathbf{V}_m\mathbf{V}_m^H\mathbf{H}_k^H\right)^{-1}\mathbf{H}_k\mathbf{V}_k,~~\forall k,\\
		\mathbf{W}_k&=\left(\mathbf{I}_{d_k}-\mathbf{U}_k^H\mathbf{H}_k\mathbf{V}_k\right)^{-1},~~\forall k,\label{eq:wmmse w update}\\
		\mathbf{V}_k&=\alpha_k \left[\sum_{m=1}^K \left(\frac{\sigma^2}{P_T}{\rm Tr}\left(\alpha_m \mathbf{U}_m\mathbf{W}_m\mathbf{U}_m^H\right)\mathbf{I}_{N_t}+\alpha_m\mathbf{H}_m^H\mathbf{U}_m\mathbf{W}_m\mathbf{U}_m^H\mathbf{H}_m\right)\right]^{-1}\mathbf{H}_k^H\mathbf{U}_k\mathbf{W}_k,~~\forall k. \label{eq:wmmse v update}
	\end{align}

	Therefore, the optimal precoding matrix $\V_k^*$ should meet the form of (\ref{eq:wmmse v update}).
	For the MISO system (i.e., $N_r=d_k=1, \forall k$), the intermediate iteration variables $\mathbf{U}_k$ and $\mathbf{W}_k$ reduce to  scalars.
	To avoid confusion, we use $\left\{\h_k, u_k, w_k, \v_k\right\}$ to denote $\left\{\H_k, \U_k, \W_k, \V_k\right\}$ in the MISO scenario.
	Then the optimal precoding vector in (\ref{eq:wmmse v update}) can be rewritten as
	\begin{align}\label{eq:optimal solution's structrue without normalization}
		\mathbf{v}_k^*=\gamma_k\left(\sigma^2\mathbf{I}+\sum_{m=1}^K\lambda_k\mathbf{h}_m^H\mathbf{h}_m\right)^{-1}\mathbf{h}_k^H,
	\end{align}
	where $\lambda_k=P_T\frac{\alpha_m u_mw_mu_m^H}{\sum_{n=1}^K\alpha_n u_nw_nu_n^H}, \gamma_k=P_T\frac{\alpha_m u_m w_m}{\sum_{n=1}^K \alpha_n u_nw_nu_n^H}$ are scaling factors.
	It can be observed that $\v_k$ and $e^{j\theta_k}\v_{k}$ have the same objective value in $\mathscr{P}_1$ \cite{bjornson2014optimal},
	thus we have the following proposition.
	\begin{proposition}\label{prop:single RB miso precoding optimal solution structure}
		\emph{For the MU-MISO system with channel $\left\{\h_k, k=1,...,K\right\}$, the optimal solution of $\P_1$ can be expressed as}
	\begin{align}\label{eq:miso optimal sulition structrue}
		\v_k^* = \sqrt{p_k} f\left[\left(\sigma^2\mathbf{I}+\sum_{m=1}^K\lambda_m\mathbf{h}_m^H\mathbf{h}_m\right)^{-1}\mathbf{h}_k^H\right],
	\end{align}
	\emph{where $\lambda_k$ and $p_k$ are positive scaling factors satisfying $\sum_{k=1}^K\lambda_k=\sum_{k=1}^K p_k=P_T$.
	$f[\mathbf{x}]=\mathbf{x}/\left\|\mathbf{x}\right\|_2$ is the normalization function.}
	\end{proposition}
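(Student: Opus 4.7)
The plan is to derive the claimed form directly from the WMMSE stationarity condition (\ref{eq:wmmse v update}) specialised to the MISO case, together with two normalisation arguments: one that rescales the ``regulariser'' factors into a simplex of size $P_T$, and one that pushes the per-user magnitudes into an independent simplex of size $P_T$ via the tightness of the power constraint.

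First I would specialise (\ref{eq:wmmse v update}) to $N_r=d_k=1$. All the matrices $\mathbf{U}_m$, $\mathbf{W}_m$ collapse to scalars $u_m$, $w_m$, so the inverse in (\ref{eq:wmmse v update}) becomes
\begin{align*}
\bigl[\tfrac{\sigma^2}{P_T}\!\sum_n \alpha_n|u_n|^2 w_n\,\mathbf{I}+\sum_m \alpha_m|u_m|^2 w_m\,\mathbf{h}_m^H\mathbf{h}_m\bigr]^{-1}.
\end{align*}
Factoring the common constant $\bigl(\sum_n \alpha_n|u_n|^2 w_n\bigr)/P_T$ out of the bracket rewrites the inverse as $\tfrac{P_T}{\sum_n \alpha_n|u_n|^2 w_n}\bigl[\sigma^2\mathbf{I}+\sum_m \lambda_m \mathbf{h}_m^H\mathbf{h}_m\bigr]^{-1}$, where
\begin{align*}
\lambda_m \;:=\; \frac{P_T\,\alpha_m|u_m|^2 w_m}{\sum_n \alpha_n|u_n|^2 w_n}.
\end{align*}
By construction these are nonnegative (since $w_n>0$ at any WMMSE fixed point, being the inverse of an MSE) and $\sum_m \lambda_m = P_T$. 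Substituting back yields (\ref{eq:optimal solution's structrue without normalization}) with $\gamma_k=P_T \alpha_k u_k w_k/\sum_n \alpha_n|u_n|^2 w_n$, which is consistent with the expression stated just after (\ref{eq:optimal solution's structrue without normalization}).

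Next I would separate magnitude from direction. Setting $p_k := \|\mathbf{v}_k^*\|_2^2$ I get
\begin{align*}
\mathbf{v}_k^* \;=\; \sqrt{p_k}\,e^{j\arg(\gamma_k)}\, f\!\left[\bigl(\sigma^2\mathbf{I}+\sum_m \lambda_m \mathbf{h}_m^H\mathbf{h}_m\bigr)^{-1}\mathbf{h}_k^H\right].
\end{align*}
At the optimum of $\mathscr{P}_1$ the power constraint is tight, otherwise one could scale every $\mathbf{v}_k^*$ uniformly by a constant $>1$ and strictly increase every SINR (and hence the WSR). Therefore $\sum_k p_k=P_T$, giving the required simplex constraint on the $p_k$'s. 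Finally, to remove the $e^{j\arg(\gamma_k)}$ factor I would invoke the phase invariance noted in the paper (each $\mathbf{v}_k$ enters $R_k$ only through $\mathbf{v}_k\mathbf{v}_k^H$, so multiplying $\mathbf{v}_k^*$ by $e^{-j\arg(\gamma_k)}$ leaves the objective and the constraint unchanged). The resulting vector is exactly the right-hand side of (\ref{eq:miso optimal sulition structrue}).

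The only real subtlety, and the step I would be most careful about, is showing that $\lambda_m \ge 0$ with probability one at the optimum. This reduces to verifying $w_m>0$, which I expect to follow from $w_m=(1-u_m^H h_m v_m)^{-1}$ being the inverse of the MMSE achieved by receiver $u_m$; standard WMMSE arguments guarantee this for any nonzero channel realisation, so the positivity is inherited at the stationary point. Everything else is algebraic rescaling plus the two standard tricks (tight power constraint and phase invariance).
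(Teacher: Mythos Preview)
Your approach is essentially the same as the paper's. The paper does not give a separate proof section for Proposition~\ref{prop:single RB miso precoding optimal solution structure}; its argument is the short passage between (\ref{eq:wmmse iterative update})--(\ref{eq:wmmse v update}) and the proposition statement, namely: specialise the WMMSE fixed-point equation for $\mathbf{V}_k$ to the MISO case to obtain (\ref{eq:optimal solution's structrue without normalization}), then invoke the phase invariance $\mathbf{v}_k\mapsto e^{j\theta_k}\mathbf{v}_k$ (citing \cite{bjornson2014optimal}) to reach (\ref{eq:miso optimal sulition structrue}). You reproduce exactly this line, but you are more careful than the paper on two points it leaves implicit: (i) you explicitly justify $\sum_k p_k=P_T$ via tightness of the power constraint at the optimum, and (ii) you flag the positivity of $\lambda_m$ through $w_m>0$. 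Both additions are correct and worth keeping; the underlying route is identical.
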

	\begin{remark}
		\emph{
		For the WSR problem in the MISO system, the optimal solution depends on two factors, 
		namely the item $\left\{\left(\sigma^2\mathbf{I}+\sum_{m=1}^K\lambda_m\mathbf{h}_m^H\mathbf{h}_m\right)^{-1}\mathbf{h}_k^H\right\}$ determining the beamforming direction, and the item $\left\{\sqrt{p_k}\right\}$ determining the beamforming power.
		Moreover, to obtain the optimal precoding vector $\v^*$,
		only $\mathbf{p}$ and $\boldsymbol{\lambda}$ need to be calculated,
		where $\mathbf{p}$ is the downlink power allocation vector, and $\boldsymbol{\lambda}$ can be viewed as the virtual uplink power allocation vector \cite{bjornson2014optimal}.
		}
	\end{remark}
	With the optimal solution in hand, the original WSR problem $\P_1$ can be converted to two power allocation problems.
	The powerful neural network can be used to learn the mapping from the channel to $\boldsymbol{\lambda}=[\lambda_1,...,\lambda_K]^T$ and $\mathbf{p}=[p_1,...,p_K]^T$,
	as discussed in \cite{xia2019deep}.

	However, to the authors' best knowledge, there is no such optimal solution structure of $\P_1$ for MIMO systems (i.e., $N_r >1$).
	To utilize the optimal solution structure in (\ref{eq:miso optimal sulition structrue}) for MIMO systems, we resort to transforming the MIMO precoding problem into the MISO precoding problem.
	\subsection{Transform MIMO Precoding to MISO Precoding}\label{subsec: miso 2 miso}
	Without loss of generality, we assume $N_r < N_t$.
	Performing the truncated singular value decomposition (SVD) for $\H_k$, we have
	\begin{align}\label{eq:channel svd decomposition}
		\H_k=\Q_k\boldsymbol{\Sigma}_k\T_k^H,
	\end{align}
	where $\Q_k\in \mathbb{C}^{N_r\times N_r}$, $\T\in \mathbb{C}^{N_t\times N_r}$, $\boldsymbol{\Sigma}_k=\diag\left(\sigma_{1,k},...,\sigma_{N_r,k}\right)$ denote the left singular matrix, right singular matrix and square matrix composed of singular values, respectively.

	As $\Q_k$ can be regarded as a part of recovery precoding at the receiver, for downlink precoding design, $\boldsymbol{\Sigma}_k\T_k^H$ can be interpreted as the equivalent channel matrix. 
    {Thus the equivalent received signal $\widetilde{\mathbf{y}}_k = \mathbf{Q}_k^{-1}\mathbf{y}_k$ is given as follows. 
	\begin{align}
		\widetilde{\mathbf{y}}_k = \left[\sigma_{1,k}\mathbf{t}_{1,k},...,\sigma_{N_r,k}\mathbf{t}_{N_r,k}\right]^H\sum_{m=1}^M\left[\mathbf{v}_{1,m},...,\mathbf{v}_{d_m,m}\right]\mathbf{s}_m + \widetilde{\mathbf{n}}_k,
	\end{align}
	where $\sigma_{i,k}$ denotes the $i$-th largest singular value, with $\t_{i,k}\in \mathbb{C}^{N_t \times 1}$ being the corresponding right singular vector, $\mathbf{v}_{i,k}$ being the precoding vector for data $s_{i,k}$, $s_{i,k}$ being the $i$-th stream of $\mathbf{s}_m$.}

	{Further separating the received signal of the $i$-th antenna of user $k$, we have  
	\begin{align}\label{eq: representatino of y_i_k}
		\widetilde{y}_{i,k} &= \sigma_i\mathbf{t}_{i,k}^H\sum_{m=1}^M \sum_{j=1}^{d_m}\mathbf{v}_{j,m}s_{j,m} + \widetilde{n}_{i,k}\notag\\&=\underbrace{\widehat{\h}_{i,k}\mathbf{v}_{i,k}s_{i,k}}_{\rm desired~signal} + \underbrace{\widehat{\h}_{i,k}\sum_{m=1,m\neq k}^M \sum_{j=1}^{d_m}\mathbf{v}_{j,m}s_{i,m}}_{\rm interference~from~other~ user's ~streams}+\underbrace{\widehat{\h}_{i,k}\sum_{l=1, l\neq i}^{d_k}\mathbf{v}_{l,m}s_{l,k}}_{\rm interference~from~same~user's~streams}+\widetilde{n}_{i,k},
	\end{align}}
	{It can be seen that the transmission process of each data stream can be viewed as a MISO system with the channel $\widehat{\h}_{i,k} = \sigma_{i,k}\t_{i,k}^H$, where $\widehat{\h}_{i,k}$ can be considered as the virtual channel. 
	As shown in (\ref{eq: representatino of y_i_k}), 
	we propose to manually assign a virtual channel to each data stream, other streams from both other users and the same user are treated as  interference. }
	From the perspective of SVD for the channel, there are $N_r$ mutually orthogonal channels with different channel gains,
	among which, we assign the channels with the highest gains to the $d_k$ data streams {\footnote{{Notice that, the proposed transformation method is only applicable when $N_r \geq d_k, \forall k$. In case of $N_r < d_k$, proper scheduling algorithm can be applied in practice.}}},
	which is given below.
	\begin{align}\label{eq:channel assign}
		\widehat{\H}_k = \left[\sigma_{1,k}\t_{1,k},...,\sigma_{d_k,k}\t_{d_k,k}\right]^H,
	\end{align}
	where $\sigma_{i,k}$ denotes the $i$-th largest singular value, with $\t_{i,k}\in \mathbb{C}^{N_t \times 1}$ being the corresponding right singular vector. 
	The virtual channel $\sigma_{i,k}\t_{i,k}$ is used for user $k$'s $i$-th data stream transmission.
	Now define $\left\{\widehat{\H}_k\right\}$ as follows.
	\begin{align}\label{eq:miso channel merge definition}
		\widehat{\H} = \left[\widehat{\H}_1^H,...,\widehat{\H}_K^H\right]=\left[\widehat{\h}_1^H,...,\widehat{\h}_M^H\right],
	\end{align}
	where $M = \sum_{k=1}^K d_k$ denotes the number of total data streams,
	$\widehat{\h}_m\in\mathbb{C}^{1\times N_t}$ can be viewed as the channel between the BS and a user with single receiving antenna, and has a priority weight coefficient inherited from the original MIMO problem  $\beta_m = \alpha_n$, where $\sum_{k=1}^{n-1}d_k<m\leq\sum_{k=1}^n d_k$.
	To this end, the original MIMO WSR precoding problem becomes equivalent to
	the $M$ users MISO WSR problem with the channel $\left\{\widehat{\h}_m, m\in \mathcal{M}\right\}$, where $\mathcal{M}=\left\{1,2,...,M\right\}$ denotes the equivalent user set.
	The detailed process is summarized in Algorithm 1.
	It is worth noting that the transforming process incurs a performance loss by discarding some available channels.
	As demonstrated in Section \ref{subsec: problem transformation loss}, the performance loss is in general negligible.
	\begin{algorithm}
		\caption{MIMO precoding transformation}\label{algo:sampling with relpacement}
		\textbf{Input:} Original channel matrix $\left\{\mathbf{H}_k\right\}$. \\
		\textbf{Output:} Precoding matrix $\left\{\mathbf{V}_k\right\}$. \\
		1: \textbf{for:} $k=1:1:K$ \textbf{do}\\
		2: ~~Do truncated SVD for $\mathbf{H}_k$: $\H_k=\Q_k\boldsymbol{\Sigma}_k\T_k^H$, where $\boldsymbol{\Sigma}_k={\rm diag}\left(\sigma_{1,k},...,\sigma_{N_r,k}\right)$, $\mathbf{T}_k=\left[\mathbf{t}_{1,k},...,\mathbf{t}_{N_r,k}\right]$.\\
		3: ~~$\widehat{\H}_k = \left[\sigma_{1,k}\t_{1,k},...,\sigma_{d_k,k}\t_{d_k,k}\right]^H$.  \\
		4: \textbf{end for}\\
		5: $\widehat{\mathbf{H}}=\left[\widehat{\H}_1^H,...,\widehat{\H}_K^H\right]=\left[\widehat{\h}_1^H,...,\widehat{\h}_M^H\right]$. \\
		6: Obtain the precoding matrix $\left\{\widehat{\mathbf{v}}_m\right\}$ by solving the MISO WSR problem, where there are $M=\sum_{k=1}^K d_k$ users in total, with the channel $\left\{\widehat{\mathbf{h}}_m\right\}$ and priority weight $\left\{\beta_m=\alpha_n \right\}$, satisfying $\sum_{k=1}^{n-1}d_k<m\leq\sum_{k=1}^n d_k$.  \\ 
		7: \textbf{for:} $k=1:1:K$ \textbf{do}\\
		8: ~~$\mathbf{V}_k=\left[\widehat{\mathbf{v}}_{(k-1)d_k +1},\widehat{\mathbf{v}}_{(k-1)d_k +2},...,\widehat{\mathbf{v}}_{kd_k}\right]$.\\
		9: \textbf{end for}
	\end{algorithm}
    \section{Learning Based Low Complexity Precoding Design}\label{sec:single RB deep learning based precoding algorithm}
	In this section, we will give the DL-based solution for the step 6 in Algorithm \ref{algo:sampling with relpacement}.
	We first discuss the design of neural network.
	Then to further reduce complexity, 
	we consider the compression of recovery module and network pruning.
	Finally, the generalization ability of the proposed scheme and computational complexity are discussed.
	\begin{figure}[tt]
		\centering
		\includegraphics[width=14cm]{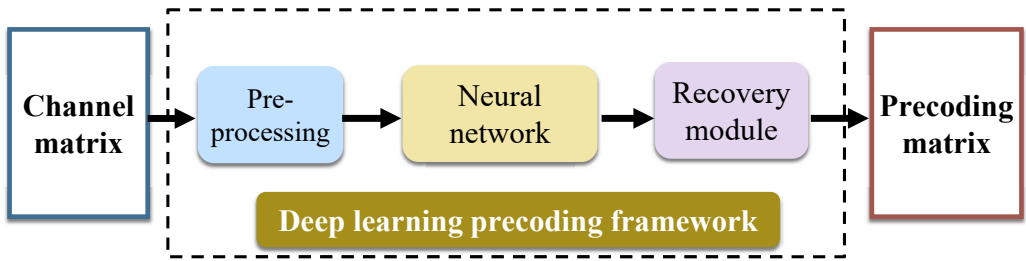}
		\caption{A general DL-based solution for MIMO precoding}
		\vspace{-10mm}
		\label{Fig:network structure for MIMO precoding}
	\end{figure}
	\subsection{Neural Network Design}
	To realize high-performance precoding algorithm,
	instead of using the ``black-box'' approach, it is essential to exploit expert knowledge, which can be achieved through pre-processing and post-processing.
	In general, an efficient deep learning based precoding method should consist of the following three key steps as illustrated in Fig \ref{Fig:network structure for MIMO precoding}:
	1) input pre-processing, which aims to remove the redundant information and reduce the input dimension;
	2) feature extraction, which intends to extract the most significant features from the compressed input;
	3) efficient recovery, which enables high quality recovery based on the key features leveraging the expert knowledge.
	The specific implementation of each sub module will be discussed below.
	\subsubsection{Pre-processing and Post-processing}\label{subsubsec: network input and output design}
	The objective of pre-processing and post-processing is to simplify the input and output of the network as much as possible.
	To simplify the output, we need to leverage the optimal solution structure in (\ref{eq:miso optimal sulition structrue}), which is achieved by transforming the original MIMO precoding problem to MISO precoding problem according to Algorithm 1.
	In this way, the network only needs to output two power allocation vectors $\mathbf{p}$ and $\boldsymbol{\lambda}$ instead of the whole precoding matrix, then  (\ref{eq:miso optimal sulition structrue}) can be used as the recovery module to recover the precoding matrix from $\mathbf{p}$ and $\boldsymbol{\lambda}$.
	Moreover, for such a feature learning task, the existing works \cite{xia2019deep,kim2020deep} directly use the channel matrix as the network input, while it is actually redundant for predicting $\mathbf{p}$ and $\boldsymbol{\lambda}$.
	To compress the network input, here we first give the following proposition.

	\begin{proposition}\label{proposition:input instruct}
		\emph{For WSR maximization problem in MISO system with channels $\left\{\h_{k},\forall k\right\}$ and weights $\left\{\alpha_k,\forall k\right\}$ of $M$ users,
		the beamforming matrix's key features $\left\{p_k,\lambda_k,\forall k\right\}$ in (\ref{eq:miso optimal sulition structrue}) is determined by $\left\{\alpha_k,\mathbf{h}_i\mathbf{h}_j^H,\forall i, j ,k\right\}$.} 
	\end{proposition}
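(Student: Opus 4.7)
The plan is to prove Proposition 2 by exploiting a unitary invariance of the WSR problem and then observing that the Gram matrix $[\mathbf{h}_i\mathbf{h}_j^H]_{i,j}$ is precisely a complete invariant for the relevant equivalence class of channel configurations. Concretely, I would first show that the MISO WSR problem $\mathscr{P}_1$ is invariant under the change of variables $\widetilde{\mathbf{h}}_k=\mathbf{h}_k\mathbf{U}$ and $\widetilde{\mathbf{v}}_k=\mathbf{U}^H\mathbf{v}_k$ for any $N_t\times N_t$ unitary $\mathbf{U}$. This is immediate because every quantity entering the objective and the constraint is of the form $\mathbf{h}_k\mathbf{v}_m$ or $\|\mathbf{v}_k\|^2$, and both are preserved: $\widetilde{\mathbf{h}}_k\widetilde{\mathbf{v}}_m=\mathbf{h}_k\mathbf{U}\mathbf{U}^H\mathbf{v}_m=\mathbf{h}_k\mathbf{v}_m$ and $\|\widetilde{\mathbf{v}}_k\|^2=\|\mathbf{v}_k\|^2$. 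Hence the mapping $\mathbf{v}_k\mapsto\mathbf{U}^H\mathbf{v}_k$ is a bijection between feasible (and optimal) solutions of the two instances.

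Next I would transport this invariance through the optimal-solution structure (\ref{eq:miso optimal sulition structrue}). Using the elementary identity
\begin{equation*}
\mathbf{U}^H\Bigl(\sigma^2\mathbf{I}+\sum_{m=1}^{M}\lambda_m\mathbf{h}_m^H\mathbf{h}_m\Bigr)^{-1}\mathbf{h}_k^H
=\Bigl(\sigma^2\mathbf{I}+\sum_{m=1}^{M}\lambda_m\widetilde{\mathbf{h}}_m^H\widetilde{\mathbf{h}}_m\Bigr)^{-1}\widetilde{\mathbf{h}}_k^H,
\end{equation*}
and the fact that the normalization $f[\cdot]$ commutes with $\mathbf{U}^H$ (since $\mathbf{U}^H$ is an isometry), I would conclude that the same pair $(\lambda_k,p_k)$ parameterizes the optimal precoder both for $\{\mathbf{h}_k\}$ and for $\{\widetilde{\mathbf{h}}_k\}$. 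In other words, the key-feature map depends only on the unitary equivalence class of the channel configuration.

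The remaining step is to show that this equivalence class is exactly captured by the Gram matrix $\mathbf{G}=[\mathbf{h}_i\mathbf{h}_j^H]_{i,j}$. One direction is trivial: $\widetilde{\mathbf{h}}_i\widetilde{\mathbf{h}}_j^H=\mathbf{h}_i\mathbf{U}\mathbf{U}^H\mathbf{h}_j^H=\mathbf{h}_i\mathbf{h}_j^H$. For the converse, given two configurations $\{\mathbf{h}_k\}$ and $\{\mathbf{h}'_k\}$ sharing the same Gram matrix $\mathbf{G}$, stack their row vectors into matrices $\mathbf{H},\mathbf{H}'\in\mathbb{C}^{M\times N_t}$ satisfying $\mathbf{H}\mathbf{H}^H=\mathbf{H}'\mathbf{H}'^H=\mathbf{G}$. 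A thin SVD of each, $\mathbf{H}=\mathbf{A}\mathbf{\Sigma}\mathbf{B}^H$ and $\mathbf{H}'=\mathbf{A}'\mathbf{\Sigma}'\mathbf{B}'^H$, shows $\mathbf{A}\mathbf{\Sigma}^2\mathbf{A}^H=\mathbf{A}'\mathbf{\Sigma}'^2\mathbf{A}'^H=\mathbf{G}$, so up to choice of bases in degenerate eigenspaces we may take $\mathbf{A}=\mathbf{A}'$ and $\mathbf{\Sigma}=\mathbf{\Sigma}'$. Then $\mathbf{U}:=\mathbf{B}\mathbf{B}'^H$ (extended to a full unitary of $\mathbb{C}^{N_t}$ if the channels do not span $\mathbb{C}^{N_t}$) satisfies $\mathbf{H}'=\mathbf{H}\mathbf{U}$, completing the argument.

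Combining the three steps, for any weights $\{\alpha_k\}$ the optimal $\{p_k,\lambda_k\}$ take the same value on all channel configurations sharing the Gram matrix $[\mathbf{h}_i\mathbf{h}_j^H]$, so they are functions only of $\{\alpha_k,\mathbf{h}_i\mathbf{h}_j^H\}$. The main obstacle I anticipate is the converse unitary-equivalence step when the channels do not span $\mathbb{C}^{N_t}$ or when $\mathbf{G}$ has repeated eigenvalues: both issues can be resolved by the thin-SVD argument above and by extending partial isometries to unitaries on the full $\mathbb{C}^{N_t}$, which is a standard construction but must be carried out carefully to avoid ambiguity.
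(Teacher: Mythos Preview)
Your argument is correct, but it takes a genuinely different route from the paper's proof. The paper establishes the proposition algorithmically: it tracks the WMMSE iterates and shows, via the push-through identity $(\sigma^2\mathbf{I}+\widetilde{\mathbf{H}}\widetilde{\mathbf{H}}^H)^{-1}\widetilde{\mathbf{H}}=\widetilde{\mathbf{H}}(\sigma^2\mathbf{I}+\widetilde{\mathbf{H}}^H\widetilde{\mathbf{H}})^{-1}$, that $\mathbf{V}^{t}$ can be written as $\mathbf{H}\mathbf{B}^{t-1}$ with $\mathbf{B}^{t-1}$ depending only on $\{\alpha_k,u_k^{t-1},w_k^{t-1},\mathbf{h}_i\mathbf{h}_j^H\}$; substituting this form into the $u_k^t,w_k^t$ updates then closes an inductive loop, so with a fixed initialization the converged $(u_k,w_k)$---and hence $(p_k,\lambda_k)$---depend only on the Gram data. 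Your approach is instead structural: you exploit the right-unitary invariance of $\mathscr{P}_1$, transport it through the closed-form structure (\ref{eq:miso optimal sulition structrue}) to see that the same $(p_k,\lambda_k)$ parameterize optima for any unitarily equivalent channel set, and then identify the Gram matrix as the complete invariant of that equivalence. Your argument is cleaner and algorithm-agnostic, applying to any optimal-precoder parameterization of the form (\ref{eq:miso optimal sulition structrue}) regardless of how the optimum is computed. The paper's iterative argument, on the other hand, is more constructive: it shows not merely that the optimum is Gram-determined but that every WMMSE iterate can be executed using only $\mathbf{H}^H\mathbf{H}$, which is precisely the observation reused later for the low-complexity recovery (\ref{eq:miso precoding matrix final presentation}). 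One minor point to tighten in your write-up: since $\mathscr{P}_1$ is nonconvex, ``the optimal $(p_k,\lambda_k)$'' need not be unique, so your conclusion should be phrased as the \emph{set} of optimal feature pairs being Gram-determined (or, equivalently, that any deterministic selection rule based on the problem data respects the unitary symmetry); this matches the paper's implicit resolution via a fixed WMMSE initialization.
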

	\begin{proof}
		See Appendix \ref{subsec: proof of proposition 2}.
	\end{proof}
	Let $\overline{\mathbf{H}}\triangleq \left[\sqrt{\beta_1}\widehat{\h}_1^H,...,\sqrt{\beta_M}\widehat{\h}_M^H\right]$, where $\widehat{\h}_k$ is the virtual channel.
	With Proposition \ref{proposition:input instruct} in hand, we just need the inner product of each equivalent user channel vector as the network input rather than the full channel state information.
	Therefore, $\mathbf{\overline{\mathbf{H}}}^H\mathbf{\overline{\mathbf{H}}}$ is taken as  the input of the neural network,
	compared with the full channel matrix $\overline{\H}$, the input dimension has been reduced from $\left(N_t\times M\right)$ to $\left(M \times M\right)$.
	{On the other hand, as the neural network only supports real number operation, it is necessary to convert the input to the real number field. 
    The conventional method is to split the real and imaginary parts and then stack them together. 
    However, since $\overline{\mathbf{H}}^H\overline{\mathbf{H}}$ is a Hermitian matrix, the upper triangular part or lower triangular part of the matrix contains all the input information. 
    Therefore we adopt a more compact way proposed in \cite{lu2020learning}, that is, we first take the upper triangle of $\overline{\mathbf{H}}^H\overline{\mathbf{H}}$, then reassemble the real part and imaginary part into a square matrix.}
	
	\subsubsection{Network Backbone Design}
	With pre-processing and post-processing,
	we adopt the network architecture in \cite{xia2019deep} to fit the mapping between $\overline{\mathbf{H}}^H\overline{\mathbf{H}}$ and $\left\{p_k,\lambda_k,\forall k\right\}$.
    {As shown in Fig. 2, 
	 the input of the network is the correlation matrix of the equivalent channels, hence, 
the powerful convolutional neural network is a suitable choice for feature extraction \cite{chen2019artificial}.  
The combination of the 2D convolutional layer, the batch normalization layer and the leaky relu activation layer composes the backbone of the network. 
Then the fully-connected layer combined with the sigmoid activation layer are responsible for predicting $\mathbf{p}$ and $\boldsymbol{\lambda}$ from the extracted information.\footnote{{In \cite{kim2020deep}, the authors proposed to estimate $\boldsymbol{\lambda}$ by $\mathbf{p}$ to further reduce the complexity. 
Since the $\mathbf{p}$ and $\boldsymbol{\lambda}$ share a common network backbone, this method can not efficiently reduce the complexity and will cause a performance loss, thus is not considered in this paper.}}}
	Considering that $\mathbf{p}$ and $\boldsymbol{\lambda}$ have constant $l$-1 norm, the outputs are further normalized by the respective $l$-1 norm.
	The network is trained by combining supervised and unsupervised learning.
	In the supervised learning stage,
	the label is obtained using the WMMSE algorithm, and the model is trained with the MSE loss function.
	Then, unsupervised learning for WSR maximization is used to further improve the performance.
	\begin{figure}[tt]
		\centering
		\includegraphics[width=14cm]{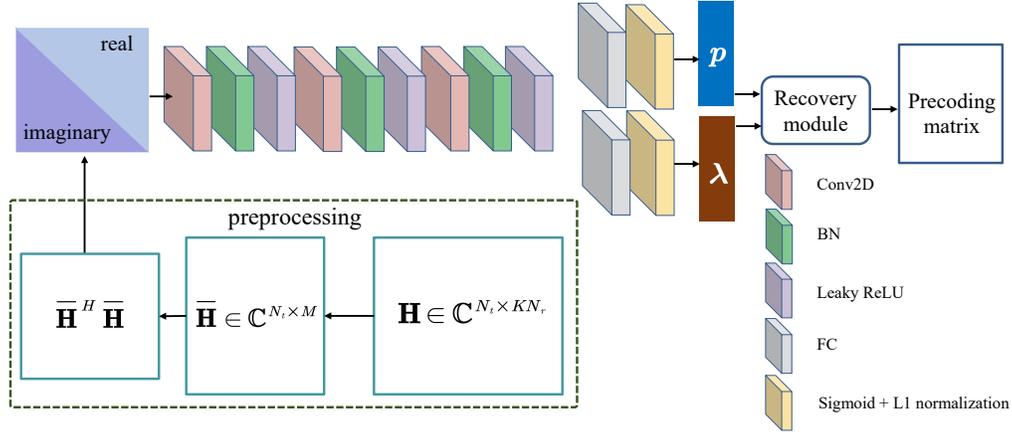}
		\caption{Network structure for MIMO precoding}
		\vspace{-7mm}
		\label{Fig:network structure single RB}
	\end{figure}

	\subsection{Model Compression}
	\subsubsection{Recovery Module Compression}
	To obtain the final precoding matrix, the recovery module is needed to recover the precoding matrix from $\mathbf{p}$ and $\boldsymbol{\lambda}$, which conducts the process as in (\ref{eq:miso optimal sulition structrue}).
	Unfortunately,	 there exists a high-dimensional matrix inverse operation in (\ref{eq:miso optimal sulition structrue}), which introduces high computational cost.
	So here we consider a low complexity recovery method.
	First let $\mathbf{V}=\left[\mathbf{v}_1,...,\mathbf{v}_M\right]$, the recovery process can be rewritten as follows.
	\begin{align}\label{eq:miso precoding matrix first presentation}
		\widehat{\mathbf{V}} = F\left(\left(\sigma^2\mathbf{I}_{N_t} + \widehat{\mathbf{H}}\boldsymbol{\Lambda}\widehat{\mathbf{H}}^H\right)^{-1}\widehat{\mathbf{H}}\right)\mathbf{P}^{\frac{1}{2}},
	\end{align}
	where $\widehat{\mathbf{H}}$ is defined in (\ref{eq:miso channel merge definition}), $\boldsymbol{\Lambda}={\rm diag}\left(\lambda_1,...,\lambda_M\right)$, $\mathbf{P}={\rm diag}\left(p_1,...,p_M\right)$, and $F(\left[\mathbf{a}_1,...,\mathbf{a}_N\right])=\left[\frac{\mathbf{a}_1}{\left\|\mathbf{a}_1\right\|^2},...,\frac{\mathbf{a}_N}{\left\|\mathbf{a}_N\right\|^2}\right]$ is the column-wise normalization function.

	Let $\widetilde{\mathbf{H}}=\widehat{\mathbf{H}}\boldsymbol{\Lambda}^{\frac{1}{2}}$,
	then (\ref{eq:miso precoding matrix first presentation}) can be rewritten as $\widehat{\mathbf{V}} = F\left(\left(\sigma^2\mathbf{I}_{N_t} + \widetilde{\mathbf{H}}\widetilde{\mathbf{H}}^H\right)^{-1}\widetilde{\mathbf{H}}\boldsymbol{\Lambda}^{-\frac{1}{2}}\right)\mathbf{P}^{\frac{1}{2}}$.
	With the equivalent transformation $\left(a\mathbf{I}+\mathbf{A}\mathbf{A}^H\right)^{-1}\mathbf{A}=\mathbf{A}\left(a\mathbf{I} + \mathbf{A}^H\mathbf{A}\right)^{-1}$ \cite{nguyen2019multi}, where $a>0$,
	finally we have
	\begin{align}\label{eq:miso precoding matrix final presentation}
		\widehat{\mathbf{V}} = F\left(\widetilde{\mathbf{H}}\left(\sigma^2\mathbf{I}_{M} + \widetilde{\mathbf{H}}^H\widetilde{\mathbf{H}}\right)^{-1}\boldsymbol{\Lambda}^{-\frac{1}{2}}\right)\mathbf{P}^{\frac{1}{2}}.
	\end{align}
	With the equivalent transformation from (\ref{eq:miso precoding matrix first presentation}) to (\ref{eq:miso precoding matrix final presentation}),
	the dimension of the inverse matrix is reduced from $N_t$ to $M$, in most of the practical MIMO systems, $M<N_t$,
	therefore, the computational complexity of the restore module can be reduced.

	\subsubsection{Network Pruning}
	Artificially designed neural network generally have redundant parts, so it is necessary to do model compression.
	For the CNN model in Fig \ref{Fig:network structure single RB}, 
		network structured pruning is an efficient way to remove the unnecessary filters of the convolutional layers. 
	We adopt the pruning algorithm in \cite{li2016pruning}, which removes the filters as well as  their connecting parts.	
	We apply it to remove the redundant filters of the first two convolutional layers.
	The whole pruning process can be divided into three steps:
	\begin{itemize}
		\item{{\textbf{Large Model Training:}}} In this stage, the complexity concern is omitted.
		Each convolutional layer is assigned enough filters to guarantee the best convergence performance.
		\item{{\textbf{Model Pruning:}}} {With the well-trained large model, for each convolutional layer, the filters are sorted based on their $l$-2 norm, and the filter has the smallest $l$-2 norm is viewed as unimportant \cite{li2016pruning,he2018soft}, thus needs to be pruned.}
		\item{{\textbf{Fine-Tuning:}}} Perform fine-tuning training on the pruned model with a small learning rate until convergence.
	\end{itemize}
	By repeating the last two steps above, the redundant parts can be gradually removed.
	In addition, it is worth noting that with network pruning,
	the deep learning based precoding algorithm can achieve dynamic adjustment between complexity and performance.
	In practice, different systems may have different requirements for the computational complexity.
	By pruning more filters, the complexity of network can be further reduced, at the cost of performance loss.
	Compared with the conventional iteration based algorithms (e.g., WMMSE algorithm), which reduce the computational complexity by adjusting the number of iterations,
	network pruning has a finer adjustment range, and thus is more practical.
	\subsection{Generalization}
	The main advantage of deep learning method is that it offloads most of the high complexity operations to the training stage.
	Only simple forward calculation is required in the application stage.
	However, the major issue is that the model is trained for a specific system configuration.
	In the current wireless communication system, the system configuration may often change dynamically, such as number of users, number of data streams and imperfect CSI caused by the channel estimation.
	In this subsection, we will discuss the generalization performance of the proposed method in the above scenarios.

	\subsubsection{Varying number of user streams} \label{subsec: vary ds}
	In the practical system, due to different data transmission requirements, the number of user streams may change. 
	This problem has not been properly addressed in existing deep learning based methods.
	For the proposed algorithm, since the original MIMO precoding problem is transformed into the MISO precoding problem,
	the neural network only needs to predict the key feature of the MISO precoding problem with $M=\sum_{k=1}^K d_k$ user streams.
	Therefore, as long as the total number of data streams remains unchanged, the trained network is applicable for any MIMO system with $\left\{d_k\right\}$, where $d_k$ satisfies $1\leq d_k \leq d_{max}$ and $\sum_{k=1}^K d_k = M$. 
	\subsubsection{Varying number of users and antennas}
	The change of the number of users ($K$) and antennas ($N_r$, $N_t$) is also very common.
	We assume that we have obtained the well-trained model for a group of specific system parameters.
	For the system with the  same antennas but less user,
	the zero filling method can be applied.
	For the system with different numbers of receive or transmit antennas but serve same number of users, the proposed precoding scheme can be directly applied without any additional processing.
	This is because with the pre-processing  in Section \ref{subsubsec: network input and output design}, the input dimension is decreased to ($M \times M$), the information in antenna dimension has been well extracted in the pre-processing stage.
	\subsubsection{Imperfect CSI}
	Many existing works on MIMO precoding assume that the CSI is perfect.
	However, in the practical system, the CSI is obtained through channel estimation, and thus may have errors (i.e., $\widetilde{\H}_k = \H_k + \mathbf{n}$).
	The advantage of data-driven method is that it can adapt to such an error through training.
	When preparing channel data for offline training, we assume that both the noisy channel and noiseless channel can be obtained,
	the former is obtained through a practical channel estimator, while the latter is obtained through an ideal channel estimator that does not account for channel estimation overhead.
	The noisy channel $\widetilde{\H}_k$ is used for forward calculation of network and precoding recovery, and the noiseless channel $\H_k$ is used to calculate the final sum-rate loss function for unsupervised learning.
	In this way, the relationship between $\widetilde{\H}_k$ and $\H_k$ can be obtained by the network itself through back propagation,
	so that in the online prediction stage, to obtain the desired precoding matrix for $\H_k$, only $\widetilde{\H}_k$ is needed.
	\subsection{Computational Complexity}
	In this subsection, we discuss the computational complexity of the proposed low complexity precoding (LCP) scheme and the WMMSE algorithm.
	For the proposed LCP algorithm, it is divided into three steps.
	The first step includes SVD operations and matrix multiplication of the merged channel, with a complexity of $\mathcal{O}\left(KN_t N_r^2 + KN_r^3 + N_t M^2\right)$, where $M$ denotes the number of the total data streams;
	the second steps includes the forward calculation of the network, with the complexity of $\mathcal{O}\left(\sum_{l=1}^{L-2}F_lJ_l^2C_{l-1}C_{l} + F_{L-1}F_{out}\right)$, where the $F_l$, $J_l$, $C_{l-1}$, $C_{l}$ denotes the feature map size, kernel size, number of input channel, number of output channel respectively, the computational complexity of offline training is omitted; the third steps includes the recovery operation in (\ref{eq:miso precoding matrix final presentation}), with a complexity of $\mathcal{O}\left(N_tM^2 + M^3\right)$.
	In conclusion, the computational complexity of the proposed LCP scheme is given by
	\begin{align}\label{eq:lcp flops}
		\mathcal{O}\left(KN_tN_r^2 + N_t M^2 + M^3 + KN_r^3 + \sum_{l=1}^{L-2}F_lJ_l^2C_{l-1}C_{l} + F_{L-1}F_{out}\right).
	\end{align}
{	According to \cite{shi2011iteratively} and \cite{hu2020iterative}, the computational complexity of WMMSE algorithm is given by:
	\begin{align}\label{eq:WMMSE flops}
		\mathcal{O}\left(L_w\left(K^2N_t N_r^2 + K^2N_t^2 N_r + KN_t^3 +K^2N_r^3\right)\right),
	\end{align}
	where $L_w$ denotes the number of iterations.}

	Compared (\ref{eq:lcp flops}) with (\ref{eq:WMMSE flops}), and with $M<\frac{1}{2}N_t$ in most scenarios, it can be found that by carefully setting the parameters of network, the computational complexity of the proposed LCP algorithm could be even less than one iteration of WMMSE algorithm. 
	\section{Extention to MIMO-OFDM systems}
	\begin{figure}[tt]
		\centering
		\includegraphics[width=14cm]{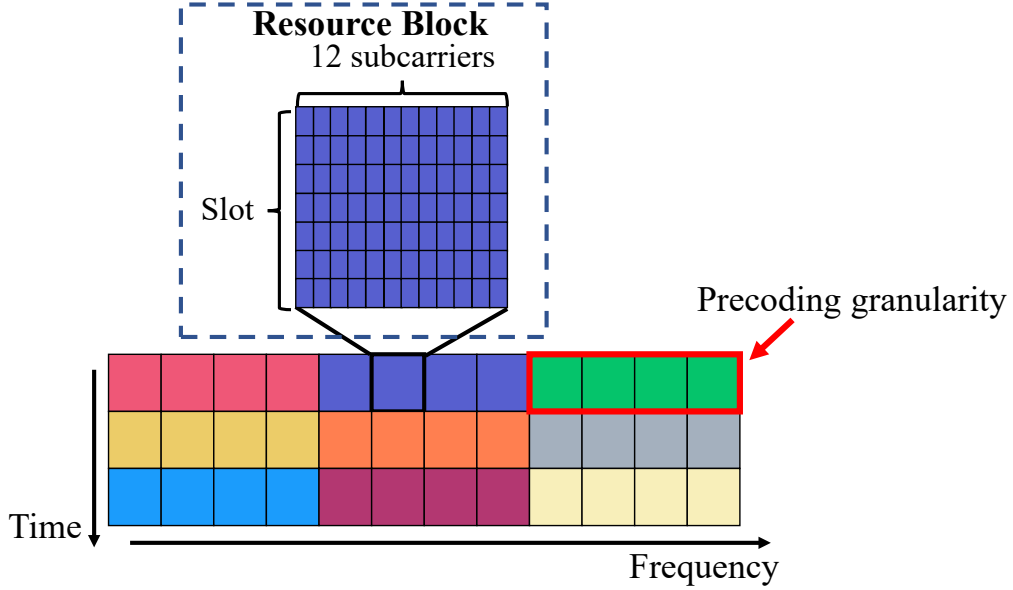}
		\caption{Precoding for the MIMO-OFDM system}
		\vspace{-7mm}
		\label{Fig:precoding for MIMO OFDM}
	\end{figure}
	In the practical OFDM system, as shown in Fig. \ref{Fig:precoding for MIMO OFDM},
	one resource block (RB) contains 12 adjacent subcarriers \cite{3gpp.36.331}, 
	the same precoding matrix is used across a set of physical RBs.
	The corresponding WSR maximization problem can be written as follows
	\begin{align}\label{pr:sum rate max}
		\mathscr{P}_2: &\max_{\left\{\V_{k}\right\}}\sum_{k=1}^{K}\sum_{b=1}^B \alpha_{k} R_{k,b},\\
		&\mathbf{s.t.}\ \sum_{k=1}^{K} {\rm Tr} (\mathbf{V}_k\mathbf{V}_k^H)\leq P_T. \notag
	\end{align}	
	where $K$ and $B$ denotes the number of users and precoding granularity respectively, the achievable rate of user $k$'s $b$-th RB channel  $R_{k,b}$ is given by
	\begin{align}\label{eq: R def}
		R_{k,b} = \log \det\left(\bf{I} + \H_{k,b} \V_k \V_{k}^H \H_{k,b}^H \left(\sum_{m\neq k}\H_{k,b} \V_m \V_m^H \H_{k,b}^H +\sigma^2 \bf{I}\right)^{-1} \right).
	\end{align}
	Different from $\mathscr{P}_1$, the $B$ RB channels $\left\{\mathbf{H}_{k,b},b=1,...,B\right\}$ share a common precoding matrix $\V_k$.
	Following the previous idea of dealing with $\mathscr{P}_1$, we first transform the original MIMO precoding problem to the MISO precoding problem, 
	then we derive the optimal solution structure for $\mathscr{P}_2$ in MISO scenario, as shown in the following proposition. %
	\begin{proposition}\label{prop: multi RB optimal solution structure}
		\emph{For MU-MISO-OFDM system with the channels $\left\{\h_{k,b},\forall k,b\right\}$ of $M$ users, the optimal solution $\left\{\v_k^*\right\}$ of $\mathscr{P}_2$ can be expressed as}
		\begin{align}
			\v_k^*=\sqrt{p_k}f\left[\left(\sigma^2\mathbf{I}_{N_t} +\sum_{m=1}^M\sum_{b=1}^B \lambda_{m,b}\h_{m,b}^H\h_{m,b} \right)^{-1}\left(\sum_{b=1}^Bq_{k,b}\h_{k,b}^H\right)\right], \forall k,
		\end{align}
		{\emph{where $p_k$ and $\lambda_{k,b}$ meet $\sum_{k=1}^M p_k=\sum_{k=1}^M\sum_{b=1}^B \lambda_{k,b} = P_T$, $q_{k,b}\in \mathbb{C}$ is a weighted scalar for $\mathbf{h}_{k,b}^H$, $f\left[\cdot\right]$ is the normalization function. }}	
\end{proposition}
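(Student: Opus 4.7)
The plan is to mirror the proof of Proposition~\ref{prop:single RB miso precoding optimal solution structure} via the WMMSE equivalence, modified to reflect that a single precoder $\mathbf{v}_k$ is now shared across the $B$ RBs of user $k$. Since each per-RB rate $R_{k,b}$ in $\mathscr{P}_2$ has the same MSE-based structure as the single-RB rate in $\mathscr{P}_1$, the WMMSE reformulation of \cite{shi2011iteratively} introduces one pair of scalar MMSE auxiliary variables $(u_{k,b}, w_{k,b})$ per $(k,b)$, while $\mathbf{v}_k$ is summed across users but \emph{not} across RBs. Block-coordinate minimisation in $\mathbf{v}_k$ is still a convex quadratic problem, and the Karush--Kuhn--Tucker condition (with $\mu \ge 0$ the multiplier of the power constraint) reads
\begin{equation*}
\Bigl(\mu \mathbf{I}_{N_t} + \sum_{m=1}^{M}\sum_{b=1}^{B} \alpha_m w_{m,b}\,|u_{m,b}|^2\,\mathbf{h}_{m,b}^H \mathbf{h}_{m,b}\Bigr)\mathbf{v}_k = \sum_{b=1}^{B} \alpha_k\, w_{k,b}\, u_{k,b}\,\mathbf{h}_{k,b}^H .
\end{equation*}

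Compared with the stationarity condition behind \eqref{eq:optimal solution's structrue without normalization}, the only structural change is that both the regulariser matrix and the right-hand side aggregate contributions across all $B$ RBs. I would then perform the same scalar absorption used in Proposition~\ref{prop:single RB miso precoding optimal solution structure}: after exploiting the active power constraint $\sum_k \|\mathbf{v}_k\|^2 = P_T$ to tie $\mu$ to $\sigma^2$ through a common positive multiplicative factor, define $\lambda_{m,b}$ as the corresponding proportional image of $\alpha_m w_{m,b}|u_{m,b}|^2$ with normalisation $\sum_{m,b}\lambda_{m,b} = P_T$, and let $q_{k,b}$ be the analogous (complex) proportional image of $\alpha_k w_{k,b} u_{k,b}$. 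The resulting unnormalised precoder $\tilde{\mathbf{v}}_k = \bigl(\sigma^2 \mathbf{I}_{N_t} + \sum_{m,b}\lambda_{m,b}\mathbf{h}_{m,b}^H\mathbf{h}_{m,b}\bigr)^{-1}\sum_b q_{k,b}\mathbf{h}_{k,b}^H$ has exactly the ``direction'' form required in the statement.

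To separate direction and magnitude, I would invoke the same phase/magnitude invariance argument used in the single-RB case: replacing $\mathbf{v}_k$ by any complex scalar multiple $c_k \mathbf{v}_k$ leaves the WSR objective unchanged modulo a re-allocation of transmit power \cite{bjornson2014optimal}. This lets me absorb $\|\tilde{\mathbf{v}}_k\|$ into a per-user downlink power variable $p_k$ and write $\mathbf{v}_k^\star = \sqrt{p_k}\,f[\tilde{\mathbf{v}}_k]$, with the downlink constraint $\sum_k p_k = P_T$ inherited directly from $\mathscr{P}_2$.

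The step I expect to be most delicate is the logical claim that \emph{every} global optimum of the nonconvex $\mathscr{P}_2$ admits such a representation, rather than merely that the WMMSE updates converge to fixed points of this form. The standard route is the WMMSE--WSR stationary-point equivalence of \cite{shi2011iteratively,christensen2008weighted}, which already accommodates the shared-precoder structure since it operates through the MSE rather than the rate. A minor additional care point is that, unlike $\lambda_{k,b}$, the coefficients $q_{k,b}$ must be allowed to be complex because $u_{k,b}$ itself is in general complex-valued, which is consistent with the statement of the proposition.
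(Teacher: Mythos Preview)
Your proposal is correct and follows essentially the same route as the paper: establish the WMMSE--WSR equivalence for the shared-precoder MISO-OFDM problem, read off the structure of the $\mathbf{v}_k$-update, and then rescale to obtain the $(p_k,\lambda_{m,b},q_{k,b})$ parameterisation. The only minor technical difference is that the paper works directly with the power-normalised WMMSE variant of \cite{hu2020iterative} (replacing $\sigma^2$ by $\tfrac{\sigma^2}{P_T}\sum_n\|\mathbf{v}_n\|^2$ in the MSE), which sidesteps the Lagrange multiplier $\mu$ you introduce and the subsequent ``tie $\mu$ to $\sigma^2$'' step; the resulting $\mathbf{v}_k$-update already carries the $\sigma^2$ regulariser and is then ``reorganised'' without further comment. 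Your version is in fact more explicit than the paper about the normalisation to $\sum_k p_k = \sum_{m,b}\lambda_{m,b}=P_T$, the need for $q_{k,b}\in\mathbb{C}$, and the stationary-point-versus-global-optimum caveat, all of which the paper leaves implicit.
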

	\begin{proof}
		See Appendix B.
	\end{proof}
	\begin{remark}
		\emph{
		For WSR problem with the precoding granularity of multiple RBs, the optimal solution structure has three key features to be determined, e.g., $\mathbf{p}=[p_1,.,,,.p_M]$, $\boldsymbol{\lambda}=\left[\lambda_{1,1},...,\lambda_{M,B}\right]$,$\mathbf{q}=\left[q_{1,1},...q_{M,B}\right]$.
		The precoding design should first consider suppressing the interference between users, controlled by the feature $\boldsymbol{\lambda}$.
		{Then, since the channels of multiple RBs share a common precoding vector,
		for the sum rate maximization problem, the MRT beamformer is obtained by the weighted sum of the channels of each RB, with the weight factor $\mathbf{q}$.}
		Finally, similar to the optimal solution structure in (\ref{eq:miso optimal sulition structrue}) for $B=1$ scenario, the transmitted power is allocated according to $\mathbf{p}$.
		}
	\end{remark}
	With the optimal solution structure in Proposition \ref{prop: multi RB optimal solution structure}, we then give the deep learning based precoding scheme, as shown in Fig. \ref{Fig:network structure multi RB}. 	
	First, by conducting SVD for the MIMO channel, the corresponding MISO channel $\left\{\widehat{\h}_{k,b},\forall k,b\right\}$is obtained.
	Then, it still can be proved that the key features $\mathbf{p}$, $\boldsymbol{\lambda}$, $\mathbf{q}$ are the function of $\left\{\beta_{k},\widehat{\h}_{i,b}\widehat{\h}_{j,c}^H,\forall i,j,k,b,c\right\}$.
	Similar with the preprocessing in Section \ref{subsubsec: network input and output design}, let $\overline{\H} = \left[\sqrt{\beta_1}\widehat{\h}_{1,1},...,\sqrt{\beta_M}\widehat{\h}_{M,B}\right]$, we use $\overline{\H}^H\overline{\H}$  rather than $\overline{\H}$ as the input of the network.
	For the prediction task of $\mathbf{p}$ and $\mathbf{q}$,
	due to the increase of network input and output dimensions, we use more convolutional layers and fully connected layers to enhance the fitting ability.
	In addition, for the prediction of $\mathbf{q}$, as it is responsible for adjusting the MRT beamforming direction within the user, so the $\q_k$ is predicted separately through three fully connected layers, where the input $\mathbf{R}_k$ denotes $\overline{\H}_k^H\overline{\H}_k$, with $\overline{\H}_k=[\sqrt{\beta_k}\widehat{\h}_{k,1},...,\sqrt{\beta_k}\widehat{\h}_{k,B}]$.

	\begin{figure}[tt]
		\centering
		\includegraphics[width=14cm]{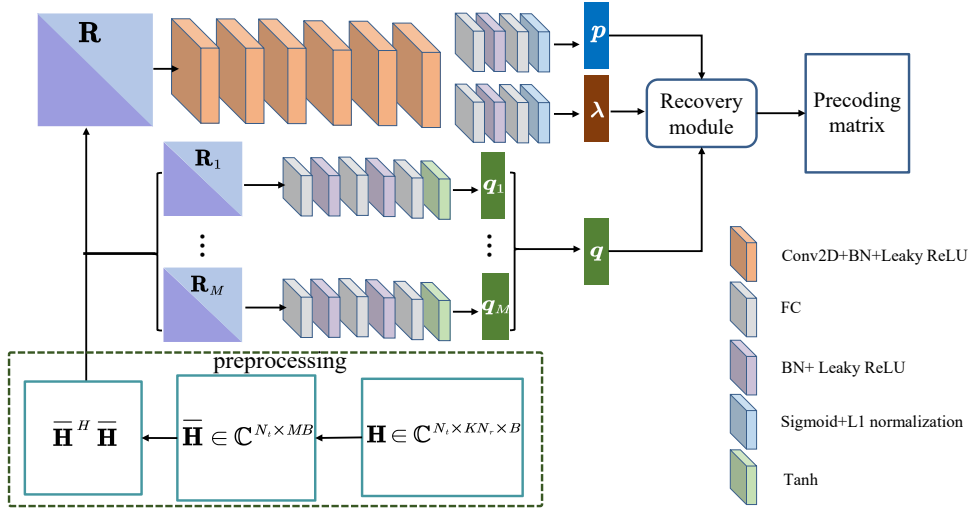}
		\caption{Network structure for MIMO-OFDM precoding}
		\label{Fig:network structure multi RB}
	\end{figure}
	\section{Simulation results}
	\subsection{Simulation Setup}
	\textbf{System setup.} We consider the wideband mmWave MIMO-OFDM system in \cite{zhou2017low},  each resource block contains 12 adjacent subcarriers.
	The first subcarrier channel in each RB is viewed as the channel for the corresponding RB.
	The channel model of the $k$-th user on the $b$-th RB is given as follows.
	\begin{align}\label{eq: mimo ofdm channel model}
		\mathbf{H}_{k,b} = \sum_{l=1}^L \zeta_{k,l} e^{-j 2\pi \tau_{k,l} f_s b/B} \mathbf{a}_r\left(\theta_{k,l}\right)\mathbf{a}_t^T\left(\phi_{k,l}\right)
	\end{align}
	where $L$ is the number of path, $\zeta_{k,l}$ is the path gain of the $l$-th path, $f_s$ denotes sampling rate, $\tau_{k,l}$ denotes the path delay, $\theta_{k,l}$ and $\phi_{k,l}$ are the azimuth angles of arrival and departure (AoA/AoD) at the receiver and the transmitter, respectively.
	The simulation setting of the above parameters are given in Table \ref{table:channel parameters}.

	For the MIMO system setting, we consider two cases, including a standard case ($N_t=16$) and a massive case ($N_t=64$). 
	The detailed system settings are presented in Table \ref{table:mimo system parameters}. 
    {The number of filters and the kernel size for the three convolution layers in Fig. \ref{Fig:network structure single RB} are $\left\{16,7\right\}$, $\left\{8,5\right\}$, $\left\{4,3\right\}$. 
    The Adam optimizer is used for network parameter updating. 
    We adopt the  learning rate decay strategy, with an initial value of 0.01 at supervised learning stage and 0.001 at unsupervised learning stage.   
    }
	\begin{table}[]
		\centering
		\caption{System parameter}
	\begin{subtable}[Channel parameters]{
		\begin{tabular}{|c|c|}
		\hline
		Parameter & Value          \\ \hline
		$L$       & 10             \\ \hline
		$\tau_{k,l}$     & $\mathcal{U}[0,100] /{\rm ns}$ \\ \hline
		$\theta_{k,l},\phi_{k,l}$     & $\mathcal{U}[0,2\pi]$        \\ \hline
		$f_s$        & 0.32 GHz        \\ \hline
		$\zeta_{k,l}$     & $\mathcal{CN}(0,1)$        \\ \hline
		\end{tabular}
		\label{table:channel parameters}}
	\end{subtable}
	\qquad
	\begin{subtable}[System configuration]{
		\begin{tabular}{|c|c|c|}
		\hline
		Parameters & case 1       & case 2       \\ \hline
		$N_t$               & 16           & 64           \\ \hline
		$N_r$               & 2            & 4            \\ \hline
		$K$                & 4$\sim$8          & 8$\sim$16         \\ \hline
		$d_k$               & 1            & 2            \\ \hline
		SNR ($\frac{p}{\sigma^2}$)             & 0 dB & 0 dB \\ \hline
		\end{tabular}
		\label{table:mimo system parameters}}
		\end{subtable}
	\end{table}

	\textbf{Benchmark schemes.} We compare the proposed {low complexity precoding} (LCP) scheme with three benchmark schemes including: 1) high performance baseline, WMMSE algorithm \cite{shi2011iteratively}; 2) data-driven baseline, \emph{learning to $\mathbf{u}$ and $\mathbf{w}$} (LUW) algorithm \cite{hu2020iterative,lu2020learning}, which achieves a good tradeoff between performance and complexity;
	3) low complexity baseline, eigen-based zero-forcing (EZF) algorithm, which obtains the precoding matrix by the Moore-Penrose pseudo-inverse of $\widehat{\H}$ defined in (\ref{eq:miso channel merge definition}).
	\subsection{Problem Transformation Loss}\label{subsec: problem transformation loss}
	Under the typical system setting in case 2, the performance loss caused by Algorithm  \ref{algo:sampling with relpacement} in different transmission SNR and number of users is shown in  Table \ref{table: performance loss test},
	where WMMSE denotes the performance of WMMSE algorithm, and LCP (ideal) denotes the optimal performance solved by Algorithm \ref{algo:sampling with relpacement} (obtained by running WMMSE algorithm to solve the corresponding MISO precoding problem in step 6), it can be viewed as the performance of the proposed LCP algorithm when the neural network is able to fit the mapping between the channel and key features perfectly.
	First, it can be seen that in all the simulation setting, the performance loss caused by the problem transformation is no more than 3\%. 
	Second, as the SNR increases, the performance loss increases slightly. This is because Algorithm \ref{algo:sampling with relpacement} discards some available channels with small channel gains, which can still be used to transmit some information when the channel noise is small enough.
	Third, an increase in the number of users will not have a significant impact on the performance loss.
	It can be concluded that the performance loss caused by problem transformation in Algorithm \ref{algo:sampling with relpacement} is small enough and can be neglected in most practical scenarios.
	\begin{table}[]
		\centering
		\caption{Performance loss caused by problem transformation}\label{table: performance loss test}
	\begin{subtable}[case 2, $K=10$, SNR = $-5\sim 15$dB]{
		\begin{tabular}{llllll}
			\hline
			SNR (dB)        & -5               & 0               & 5               & 10              & 15               \\ \hline
			WMMSE      & 24.921         & 44.325        & 70.317        & 100.550       & 132.320        \\ \hline
			LCP (ideal) & 24.715 (99.2\%) & 43.601 (98.4\%) & 68.505 (97.4\%) & 97.550 (97.0\%) & 128.822 (97.3\%) \\ \hline
		\end{tabular}
		\label{subtable: differ SNR for test converting loss}}
	\end{subtable}
	\qquad
	\begin{subtable}[case 2, SNR$=0$dB, $K=8\sim 16$]{
		\begin{tabular}{llllll}
			\hline
			{\# of user}       & 8               & 10              & 12              & 14              & 16              \\ \hline
			WMMSE      & 40.242          & 44.325          & 47.658          & 50.458          & 52.843          \\ \hline
			LCP (ideal) & 39.643 (98.5\%) & 43.601 (98.4\%) & 46.836 (98.3\%) & 49.561 (98.2\%) & 51.877 (98.2\%) \\ \hline
		\end{tabular}
		\label{table:mimo system parameters}}
		\end{subtable}
	\end{table}
	\subsection{Performance and Complexity Evaluation}
	\subsubsection{Sum-Rate Performance}
	\begin{figure}[tt]
		\centering
	  \subfigure[case 1]{
		\label{Fig:single RB differ user Nt 16:a} 
		\includegraphics[width=0.485\linewidth]{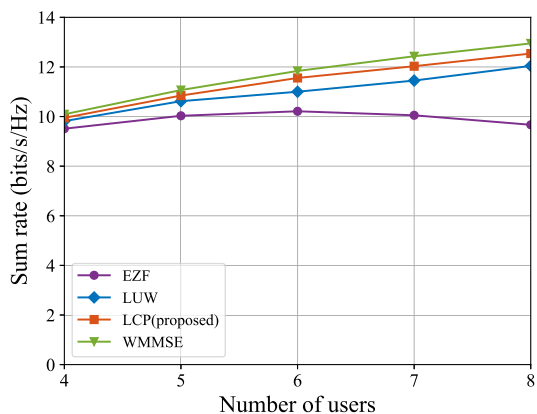}}
	  \subfigure[case 2]{
		\label{Fig:single RB differ user Nt 64:b} 
		\includegraphics[width=0.485\linewidth]{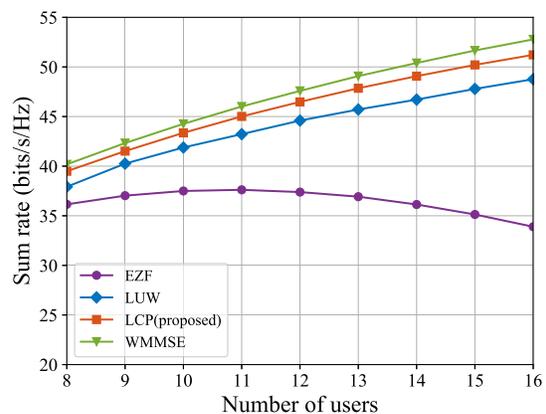}}
	
		\caption{Performance comparison in two system cases}\label{fig: performance comparison in 1RB}
		\label{Fig:single RB differ user Nt}
	\end{figure}
	In this subsection, we evaluate the performance of the proposed LCP algorithm.
	The sum rate performance under two system cases in Table \ref{table:mimo system parameters} are illustrated in Fig \ref{fig: performance comparison in 1RB}.
	Under the system setting of case 1, the sum rate performance of the proposed LCP scheme has a similar performance with WMMSE algorithm, then followed by LUW and EZF algorithm.
	With the increase of the number of users, the performance gap between LCP and WMMSE algorithm increases, the same trend also appears in the system setting of case 2.
	This is because with the increase of $K$, the input-output dimension of the network increases, resulting in the increase of the loss caused by network fitting.
	But the gap still holds in an acceptable range. 	
	Similarly, as the LUW algorithm has a larger output, the performance gap between the LUW algorithm and the proposed LCP algorithm also increases with the increase of $K$.
	As for EZF algorithm, due to the limitation of solution space dimension, its performance is significantly worse than other algorithms in the scenario of large $K$.
	In conclusion, the proposed LCP scheme not only outperforms the LUW and EZF algorithm, but also achieves comparable performance to WMMSE algorithm.


	\subsubsection{Computational Complexity}
	\begin{figure}[tt]
		\centering
	  \subfigure[case 1]{
		\label{Fig:complexity:a} 
		\includegraphics[width=0.485\linewidth]{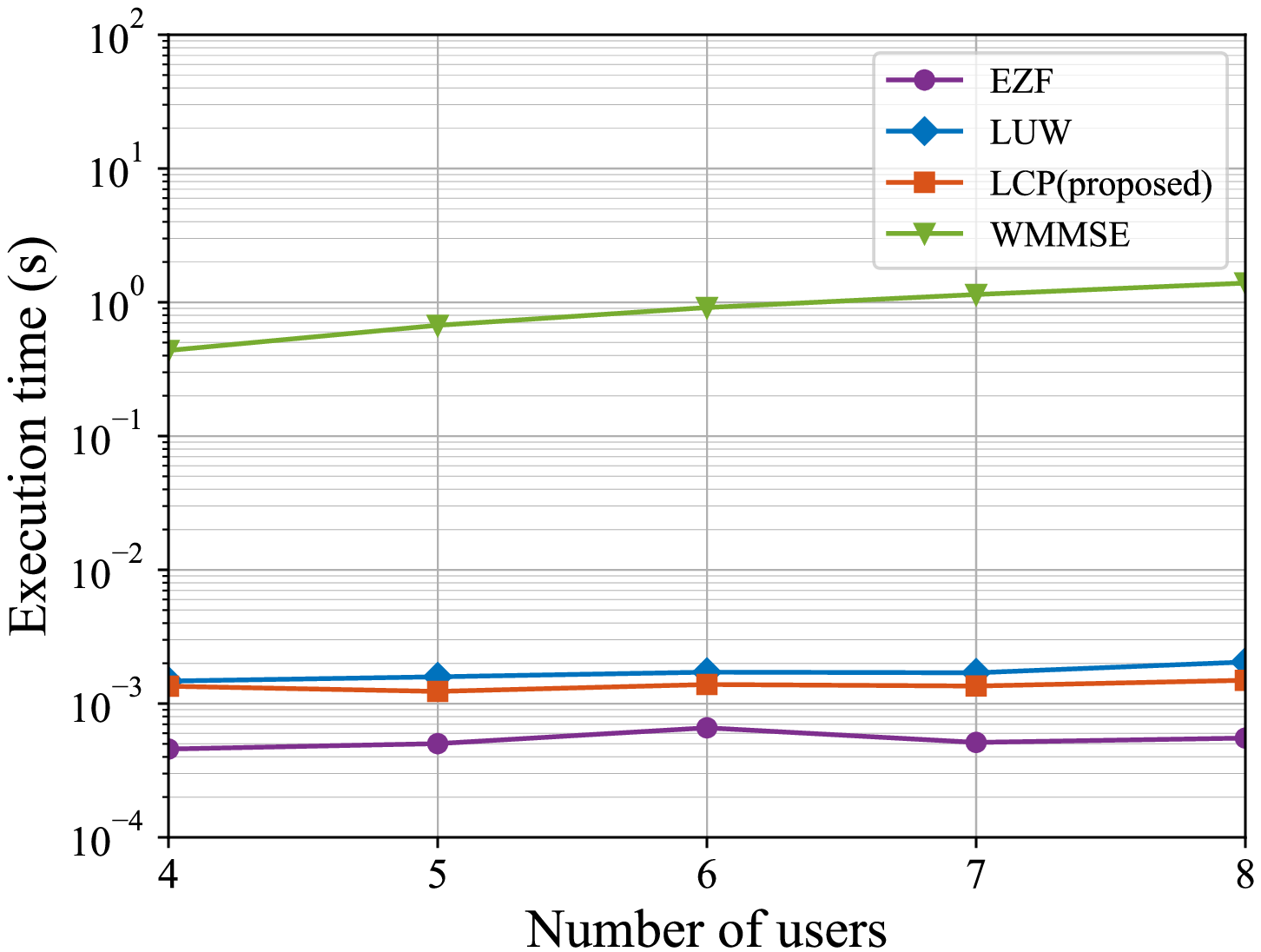}}
	  \subfigure[case 2]{
		\label{Fig:complexity:b} 
		\includegraphics[width=0.485\linewidth]{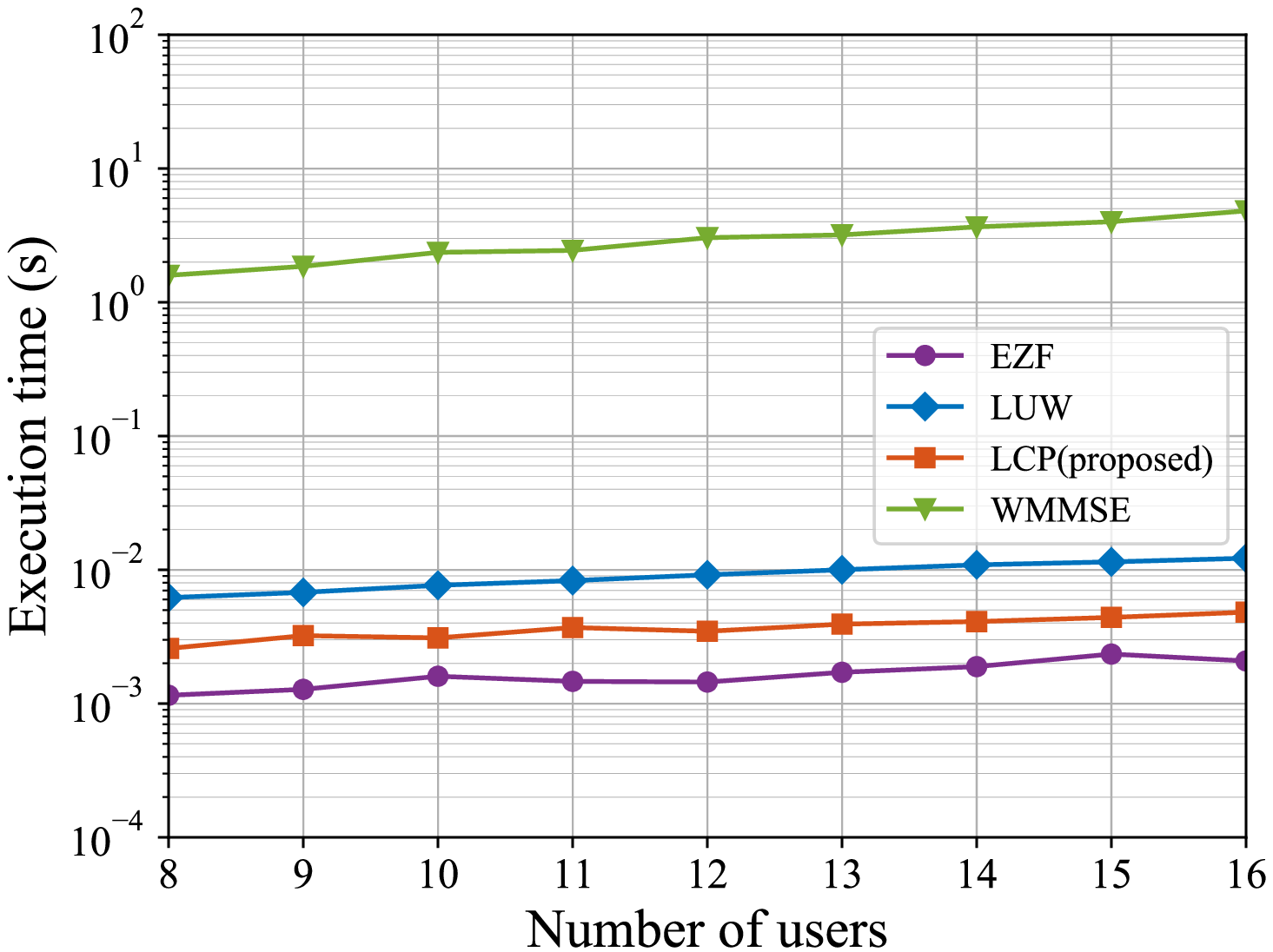}}
	
		\caption{{Execution time comparison in two system setting cases}}
		\label{Fig:complexity}
	\end{figure}
	Fig \ref{Fig:complexity} presents the CPU processing time of the four precoding schemes in the prediction stage.
	In the small system setting case, the execution time of the proposed LCP algorithm is slightly less than that of the LUW algorithm, and slightly more than twice that of EZF.
	Reasonably, due to the inversion of high-dimensional matrix in each iteration and multiple iteration,  WMMSE algorithm takes much longer time to complete the iterative process, preventing it to be applied in the real-time communication systems.
	A similar phenomenon can also be found in the large system setting case.
	However, as the computational complexity of the recovery module in LUW is $	\mathcal{O}\left(KN_t N_r^2 + KN_t^2 N_r + KN_t^3 +KN_r^3\right)$, the execution time of LUW algorithm is clearly increased.
	Fortunately, the proposed LCP algorithm still holds its execution time about twice that of EZF algorithm. 
	In conclusion, the proposed LCP algorithm achieves similar performance of WMMSE algorithm with only twice execution time of EZF algorithm, and has better performance of LUW algorithm with smaller execution time,
	which shows its superiority in terms of both performance and complexity.

	
	\subsection{Impact of Network Pruning}
	\begin{table}[]
		\centering
		\caption{Model pruning results}\label{table: model pruning}
		\vspace{3mm}
		\begin{tabular}{c|cc|c|c}
				\hline
				\multirow{2}{*}{\#users (K)} & \multicolumn{2}{c|}{\# of pruning filters } & \multirow{2}{*}{sum rate} & \multirow{2}{*}{relative complexity of the network part} \\ \cline{2-3}
				& \multicolumn{1}{c|}{  first layer} & \multicolumn{1}{c|}{  second layer} &          &          \\\hline
				\multirow{4}{*}{10} & \multicolumn{1}{c|}{0}           & \multicolumn{1}{c|}{0}            & 43.34 & 1.00 \\
				& \multicolumn{1}{c|}{8}           & \multicolumn{1}{c|}{4}            & 43.30 & 0.35 \\
				& \multicolumn{1}{c|}{12}         & \multicolumn{1}{c|}{6}            & 43.26 & 0.14 \\
				& \multicolumn{1}{c|}{15}          & \multicolumn{1}{c|}{7}           & 43.13 & 0.05 \\ \hline
			\end{tabular}%
		\vspace{-3mm}
	\end{table}
	Table \ref{table: model pruning} shows the performance and the complexity when removing different number of convolution filters. Firstly the large model ($J_1=16$, $J_2=8$, $J_3 = 4$) is carefully trained, and then different size of model is obtained by pruning convolution filters recursively.
	With more convolution filters being pruned, the sum rate only decreases slightly,  which indicates there are some redundant parts in the network.
	Moreover, according to the complexity requirements of different systems, the dynamic adjustment between complexity and performance can be realized by cutting off different number of convolution filters. 	
	\subsection{Generalization}
	\subsubsection{Varying Number of User Streams}
	Fig \ref{Fig:differ ds} shows the performance of users with the same or different number of data streams under case 2's system setting,
	as dicussed in Section \ref{subsec: vary ds}, the total number of streams is set to $2K$, for the same $d_k$ case, the number of data streams per user $d_k$ is fixed to $2$; for the different $d_k$ case, $d_k$ could vary from $1$ to $4$ between users.
	First, compared with the same $d_k$ case,  both algorithms have a worse performance when $d_k$ is different between users.
	For a specific user, allocating too many data streams will not increase the sum rate, while too few will have a great impact on the sum rate performance since the channel between the base station and the user is not fully utilized.
	Second, it is found that the performance gap between WMMSE and LCP in the same $d_k$ setting (blue shadow) and the gap in the different $d_k$ setting (red shadow) are almost the same, which shows that even in the scenario with dynamic changes of user streams, with the transformation in Algorithm \ref{algo:sampling with relpacement}, the proposed scheme can still maintain a performance close to WMMSE algorithm without any change to the network design.
	\begin{figure}[tt]
		\centering
		\includegraphics[width=10cm]{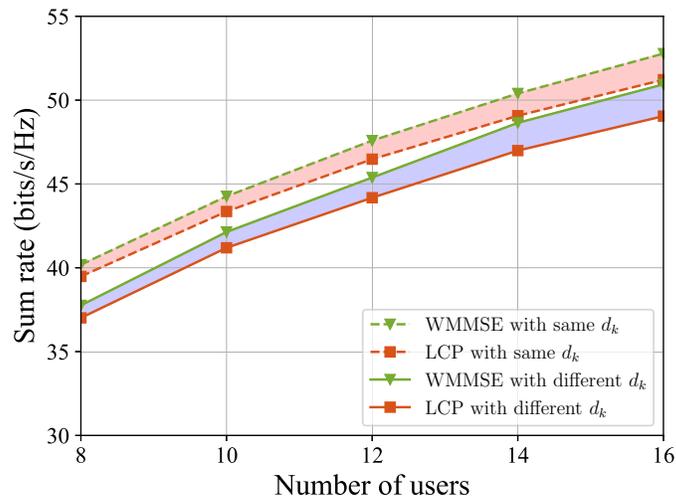}
		\caption{Performance comparison in the scenario of users with same or different $d_k$ }
		\label{Fig:differ ds}
	\end{figure}

	\subsubsection{Varying Number of Users}
	Fig \ref{Fig:zero patch} presents the user number generalization performance of the proposed method under different training settings,
	where the abscissa represents the actual number of users in the system.
	For a specific number of users $K_0$,
	three training settings are considered, including 1) trained with data of specific $K$: training a $K=K_0$ model with $\left\{\H_k, k\leq K_0\right\}$, testing on $\left\{\H_k, k\leq K_0\right\}$;
	2) trained without zero filling data: training a $K=12$ model with $\left\{\H_k, k\leq 12\right\}$, testing on  $\left\{\H_k, k\leq K_0\right\}$ and $\left\{\H_k = \mathbf{0}, K_0<k \leq 12\right\}$;
	3) trained with zero filling data: training a $K=12$ model with $\left\{\H_k, k\leq K_0\right\}$ and $\left\{\H_k = \mathbf{0}, K_0<k \leq 12\right\}$, testing on  $\left\{\H_k, k\leq K_0\right\}$ and $\left\{\H_k = \mathbf{0}, K_0<k \leq 12\right\}$.
	For the WMMSE algorithm, it can naturally adapt to the zero filling input, so it maintains nearly optimal performance, and thus can be used as a performance baseline of generalization ability.
	For the proposed algorithm,
	when the training data does not contain zero filling data, the network can not adapt to the input of zero filling data, the performance decreased obviously.
	Fortunately, with the help of zero filling data for training,
	it can be seen that
	the $K=12$ model trained on zero filling data  achieves almost the same performance with the customized model trained on specific data, and a close performance to the WMMSE algorithm,  which shows its strong generalization ability on number of users.
	In conclusion, the data-driven method should try to make the training data and test data have similar statistical characteristics.
	Once the proposed scheme has partial zero filling data samples for auxiliary training, the generalization ability can be greatly improved.
	
	\begin{figure}[tt]
		\centering
		\includegraphics[width=10cm]{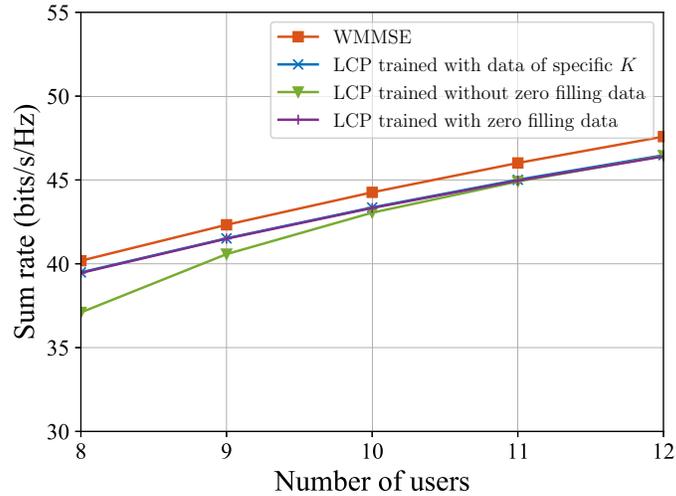}
		\caption{The generalization ability comparison in different training settings}
		\vspace{-8mm}
		\label{Fig:zero patch}
	\end{figure}

	\subsubsection{Imperfect CSI}
	Fig. \ref{Fig:imperfect CSI} shows the performance of the four precoding schemes with imperfect CSI,
	where the precoding matrix is designed based on the noisy channel, i.e., $\widetilde{\H} = \H + \mathbf{n}$, and the actual sum rate is calculated based on the noiseless channel $\H$.
	For WMMSE algorithm and EZF algorithm, 
	their performance drops a lot due to the strong dependence on the accurate channel.
	For the proposed LCP algorithm, it can adapt to such noise through a large number of training samples in the offline training stage, so as to performance better.
	Though the LUW algorithm, another learning based scheme, enjoys such adaptive process as well, the proposed LCP algorithm outperforms the LUW algorithm in all the error settings.
	With the decrease of channel estimation error, the channel becomes relatively more accurate, the WMMSE algorithm leads the performance again at the cost of huge computational complexity, but still the proposed LCP algorithm has a close performance with the WMMSE algorithm, which demonstrates the robustness of the proposed LCP algorithm in imperfect CSI scenarios.
	\begin{figure}[tt]
		\centering
		\includegraphics[width=10cm]{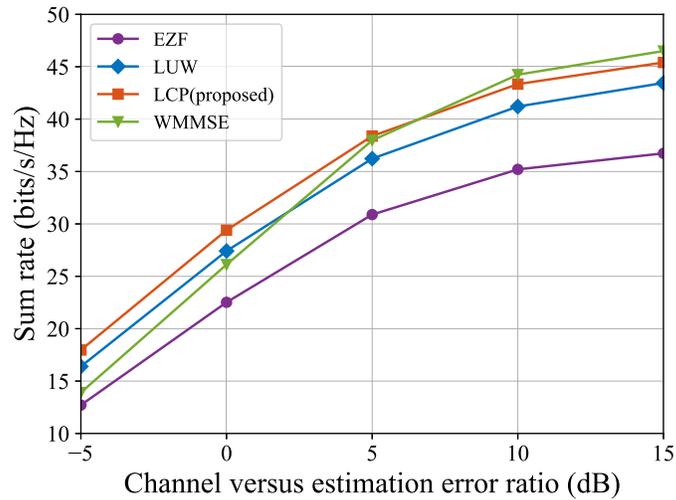}
		\caption{The sum-rate performance in the scenario of imperfect CSI}
		\vspace{-5mm}
		\label{Fig:imperfect CSI}
	\end{figure}
	\subsection{Precoding with Granularity of Multi-RB Channels}
	\begin{figure}[tt]
		\centering
	  \subfigure[case 1]{
		\label{Fig:differ_SNR:a} 
		\includegraphics[width=0.485\linewidth]{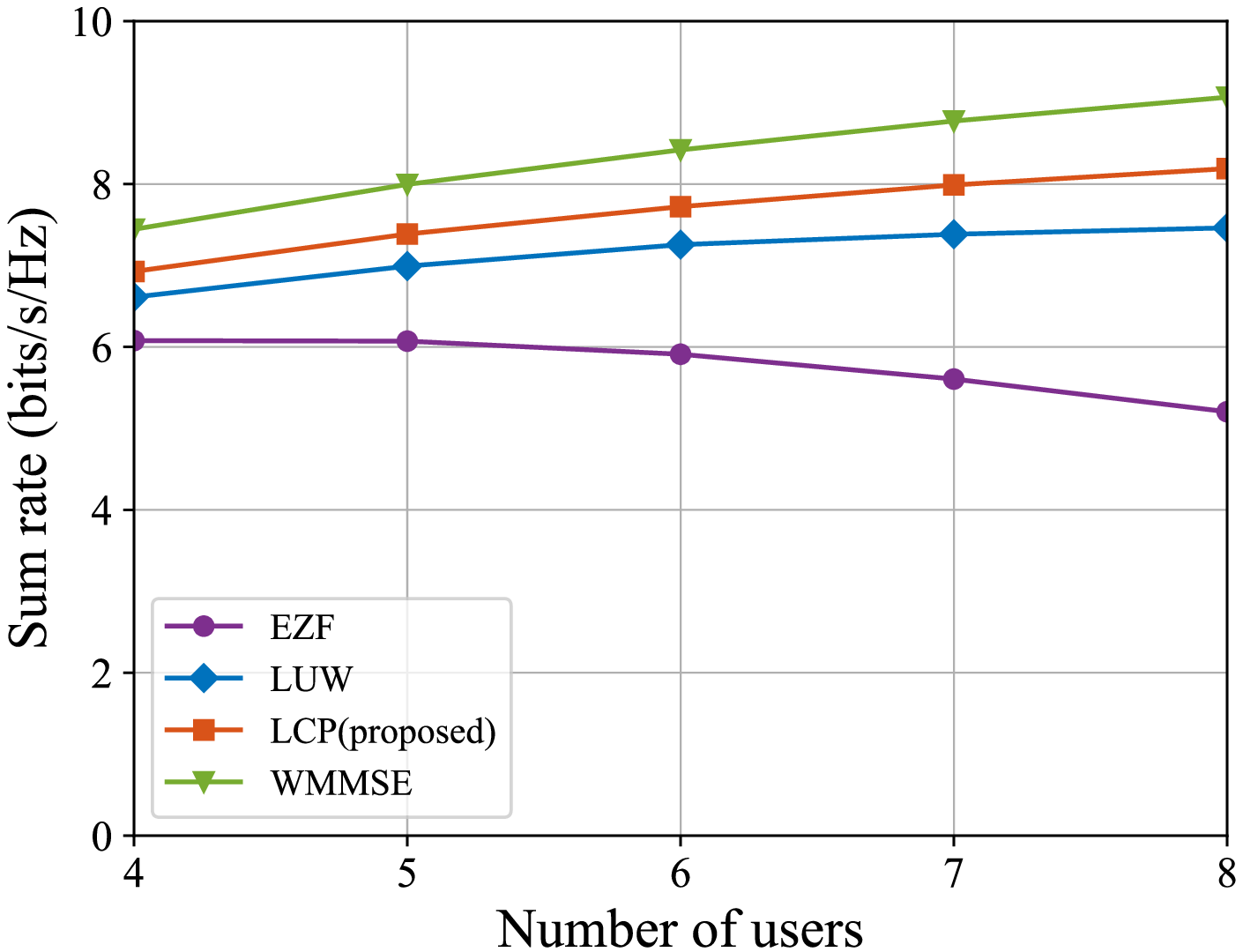}}
	  \subfigure[case 2]{
		\label{Fig:differ_SNR:b} 
		\includegraphics[width=0.485\linewidth]{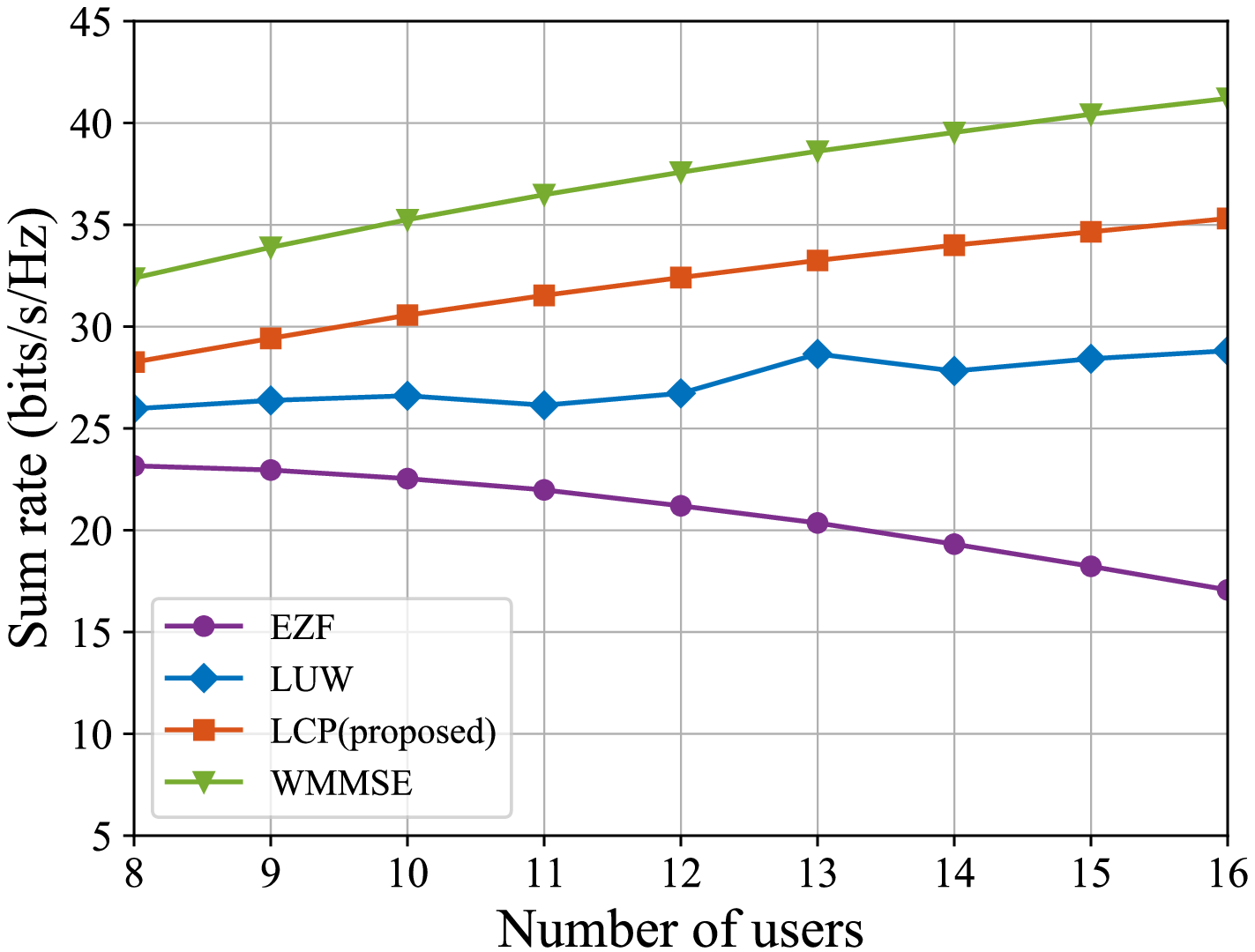}}
	
		\caption{{Performance comparison in the MIMO-OFDM system}}
		\vspace{-5mm}
		\label{Fig:differ user B = 4}
	\end{figure}
	Fig \ref{Fig:differ user B = 4} shows the sum rate performance of the four precoding schemes when the precoding granularity $B=4$.
	It can be found that in the two system setting cases, the WMMSE algorithm,
	still leads the performance.
	For  the proposed LCP algorithm, it achieves nearly 90\% performance of WMMSE algorithm, while has much lower computational complexity.
	Moreover, compared with EZF and LUW algorithm which has a similar computational complexity, LCP algorithm significantly outperforms them in the sum-rate performance.
	It can be concluded that the proposed LCP algorithm has good practicability in MIMO-OFDM system.
	\begin{figure}[tt]
		\centering
		\includegraphics[width=10cm]{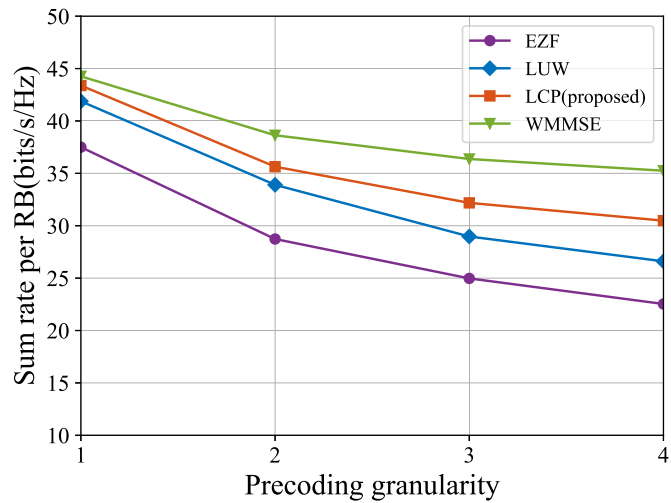}
		\caption{Performance comparison in different precoding granularity}
		\vspace{-10mm}
		\label{Fig:differ RB}
	\end{figure}
	Fig \ref{Fig:differ RB} shows the sum rate performance of the four schemes in different precoding granularity, where the sum rate is divided by the precoding granularity $B$.
	When $B$ increases, there are more RB that needs to share a common precoding matrix,
	the precoding algorithm needs to take all of them into account,  and thus the per RB's sum-rate decreases.
	It can be found that the proposed LCP algorithm outperforms the LUW algorithm and the EZF algorithm a lot, and still has the  closest performance to WMMSE algorithm.
	The performance gap between the proposed LCP algorithm and WMMSE algorithm increases with the increase of $B$.
	This is because when $B$ increases, the key feature $\left\{q_{k,b}\right\}$  has a greater impact on the performance, and the learning performance of neural network drops due to  the more complex mapping from the channel to $\mathbf{q}$.
	This problem can be solved by designing a more effective network, which is worthy of further study to promote the application of the proposed algorithm in the practical system.


	\section{Conclusion}
	\vspace{-2mm}
	In this paper, we have proposed a novel data-driven precoding algorithm.
	Firstly, we develop a method to transform the MIMO precoding problem into MISO precoding problem.
	Then, we present a DL-based solution for the corresponding MISO precoding problem, and further reduce the complexity through recovery module compression and model pruning.
	The extension to the MIMO-OFDM systems is also discussed.
	Experimental results show that the proposed scheme achieves similar performance with the WMMSE algorithm, and has a much lower computational cost.
	For future work, it is interesting to extend the proposed scheme to hybrid precoding  scenarios and multi-cell systems.


	\appendix
	\subsection{Proof of Proposition \ref{proposition:input instruct}}\label{subsec: proof of proposition 2}
	\begin{proof}
		From (\ref{eq:optimal solution's structrue without normalization}), it can be found that $\left\{p_k,\lambda_k\right\}$ is determined by the final $\left\{u_k^N,w_k^N\right\}$, where $N$ denotes the number of iterations to convergence.
		Therefore, to establish proposition 2, we only needs to prove that $\left\{u_k,w_k\right\}$ is determined by $\left\{\mathbf{h}_i\mathbf{h}_j^H\right\}$.
		For convenience of expression, we use $x \rightarrow y$ to denote that $y$ is a function of $x$.

		With the intermediate variable $\left\{\mathbf{v}_k^{t-1},u_k^{t-1},w_k^{t-1}\right\}$,
		the WMMSE algorithm obtains $\left\{\mathbf{v}_k^{t},u_k^{t},w_k^{t}\right\}$ by performing the following three steps.
		\begin{align}\label{eq:wmmse miso u update}
			\mathbf{v}_k^{t}&=\left(\sigma^2\mathbf{I}_{N_t}+\sum_{m=1}^M \lambda_m^{t-1} \mathbf{h}_m^H\mathbf{h}_m\right)^{-1}\mathbf{h}_k^H \gamma_k^{t-1}, \forall k,\\
			u_k^t&=\left(\frac{\sigma^2}{P_T}\sum_{m=1}^M \left(\mathbf{v}_m^t\right)^H\mathbf{v}_m^t+\mathbf{h}_k\mathbf{v}_m^t\left(\mathbf{v}_m^t\right)^H\mathbf{h}_k^H\right)^{-1}\mathbf{h}_k\mathbf{v}_k^t,\forall k,\label{eq: wmmse ut iteration}\\
			w_k^t&=\left(1-(u_k^t)^H\mathbf{h}_k\mathbf{v}_k^t\right)^{-1},\forall k.	\label{eq: wmmse wt iteration}	
		\end{align}
		where $\lambda_m^{t-1}=P_T\frac{\alpha_m u_m^{t-1}w_m^{t-1}\left(u_m^{t-1}\right)^H}{\sum_{n=1}^M \alpha_n u_n^{t-1}w_n^{t-1}\left(u_n^{t-1}\right)^H}$, $\gamma_m^{t-1}=P_T\frac{\alpha_mu_m^{t-1}w_m^{t-1}}{\sum_{n=1}^M \alpha_n u_n^{t-1}w_n^{t-1}\left(u_n^{t-1}\right)^H}$.
	
		Let $\boldsymbol{\Lambda}^{t}\triangleq{\rm diag}\left(\lambda_1^{t},...,\lambda_K^{t}\right)$, $\mathbf{\Gamma}^{t}\triangleq{\rm diag}\left(\gamma_1^t,...,\gamma_K^t\right)$, $\mathbf{H}\triangleq \left[\mathbf{h}_1^H,..,\mathbf{h}_M^H\right]$, $\mathbf{V}^{t}\triangleq \left[\v_1,...,\v_M\right]$.
		Then $\mathbf{V}^{t}$ can be expressed as
		\begin{align}\label{eq:v merge expression init}
			\mathbf{V}^{t}=\left(\sigma^2\mathbf{I}_{N_t}+\mathbf{H}\boldsymbol{\Lambda}^{t-1}\mathbf{H}^H\right)^{-1}\mathbf{H}\boldsymbol{\Gamma}^{t-1}=\left(\sigma^2\mathbf{I}_{N_t}+\widetilde{\mathbf{H}}\widetilde{\mathbf{H}}^H\right)^{-1}\widetilde{\mathbf{H}}\left(\boldsymbol{\Lambda}^{t-1}\right)^{-\frac{1}{2}}\mathbf{\Gamma}^{t-1},		\end{align}
		where $\widetilde{\mathbf{H}}=\mathbf{H}\left(\boldsymbol{\Lambda}^{t-1}\right)^{\frac{1}{2}}$.

		Then we have
		\begin{align}\label{eq:v merge expression reformulate}
			\mathbf{V}^t&=\left(\sigma^2\mathbf{I}_{N_t}+\widetilde{\mathbf{H}}\widetilde{\mathbf{H}}^H\right)^{-1}\widetilde{\mathbf{H}}\left(\boldsymbol{\Lambda}^{t-1}\right)^{-\frac{1}{2}}\mathbf{\Gamma}^{t-1}\\
			&=\widetilde{\mathbf{H}}\left(\sigma^2\mathbf{I}_{M}+\widetilde{\mathbf{H}}^H\widetilde{\mathbf{H}}\right)^{-1}\left(\boldsymbol{\Lambda}^{t-1}\right)^{-\frac{1}{2}}\mathbf{\Gamma}^{t-1}\\
			&=\mathbf{H}~~\underbrace{\left(\boldsymbol{\Lambda}^{t-1}\right)^{\frac{1}{2}}\left(\sigma^2\mathbf{I}_M+\left(\left(\boldsymbol{\Lambda}^{t-1}\right)^{\frac{1}{2}}\right)^H\mathbf{H}^H\mathbf{H}\left(\boldsymbol{\Lambda}^{t-1}\right)^{\frac{1}{2}}\right)^{-1}\left(\boldsymbol{\Lambda}^{t-1}\right)^{-\frac{1}{2}}\mathbf{\Gamma}^{t-1}}_{\mathbf{B}^{t-1}}		\end{align}
		
		From the definition of $\mathbf{B}^{t-1}$, we have
		\begin{align}\label{B's function}
			\left\{\alpha_k,u_k^{t-1},w_{k}^{t-1},\mathbf{h}_i\mathbf{h}_j^H, \forall i,j,k \right\}\rightarrow \mathbf{B}^{t-1}.
		\end{align}
		Moreover, $\v_k^t$ can be expressed as
		\begin{align}\label{eq: precoding vector linear expression}
			\mathbf{v}_k^t=\sum_{m=1}^Mb_{m,k}^{t-1}\mathbf{h}_m^H,
		\end{align}
		where $b_{m,k}^{t-1}$ is the element of row $m$ and column $k$ of $\mathbf{B}^{t-1}$.
		
		Substituting (\ref{eq: precoding vector linear expression}) into (\ref{eq: wmmse ut iteration}) and (\ref{eq: wmmse wt iteration}), we have
		\begin{align}\label{eq: u w's function}
		\left\{\left\{b_{i,j}^{t-1}\right\},\left\{\mathbf{h}_{i}\mathbf{h}_{j}^H\right\}\right\}\rightarrow \left\{u_k^{t},w_k^{t}\right\}.
		\end{align}
		Combining (\ref{B's function}) and (\ref{eq: u w's function}), we further have
		\begin{align}
			\left\{\alpha_k,u_k^{t-1},w_{k}^{t-1},\mathbf{h}_i\mathbf{h}_j^H\right\} \rightarrow \left\{u_k^{t},w_k^{t}\right\}.
		\end{align}
		Therefore, with a fixed initial variable, the final $\left\{u_k,w_k\right\}$ is only determined by $\left\{\alpha_k,\mathbf{h}_i\mathbf{h}_j^H,\forall i, j ,k\right\}$, which ends the proof.
	\end{proof}
	\subsection{Proof of Proposition \ref{prop: multi RB optimal solution structure}}
	The WSR maximization problem  $\mathscr{P}_2$ in MISO scenario is given as follows.
	\begin{align}
		\mathscr{P}_3:\max_{\left\{\mathbf{v}_k\right\}}&\sum_{k=1}^M\sum_{b=1}^B \alpha_{k,b}{\rm logdet}\left(1+\mathbf{h}_{k,b}\mathbf{v}_k\mathbf{v}_k^H\mathbf{h}_{k,b}^H\left(\sum_{m\neq k}\mathbf{h}_{k,b}\mathbf{v}_m\mathbf{v}_m^H\mathbf{h}_{k,b}^H+\sigma^2\right)^{-1}\right),\\
\mathbf{s.t.} &\sum_{k=1}^M{\rm Tr}\left(\mathbf{v}_k\mathbf{v}_k^H\right)\leq P_T.
	\end{align}
	Then the weighted least mean square error problem can be formulated as
	\begin{align}
		\mathscr{P}_4:\min_{\left\{u_{k,b},w_{k,b},\mathbf{v}_k\right\}} \sum_{k=1}^K\sum_{b=1}^B\alpha_{k,b}\left(w_{k,b}e_{k,b}-{\rm log}\left(w_{k,b}\right)\right)
	\end{align}
	where $e_{k,b}$ is given by
	\begin{align}
		e_{k,b}=\left(1-u_{k,b}^H\mathbf{h}_{k,b}\mathbf{v}_k\right)\left(1-u_{k,b}^H\mathbf{h}_{k,b}\mathbf{v}_k\right)^H+\sum_{m\neq k}u_{k,b}u_{k,b}^H\mathbf{h}_{k,b}\mathbf{v}_m\mathbf{v}_m^H\mathbf{h}_{k,b}^H \notag\\+\frac{\sum_{n=1}^{M}{\rm Tr}\left(\mathbf{v}_n\mathbf{v}_n^H\right)}{P_T}\sigma^2u_{k,b}^Hu_{k,b}
	\end{align}
	Similar to \cite{shi2011iteratively} and \cite{hu2020iterative}, it can be proven that $\mathscr{P}_3$ and $\mathscr{P}_4$ is equivalent.
	Therefore,  we can obtain the iterative based solution for $\mathscr{P}_3$ by solving $\mathscr{P}_4$ as follows.
	\begin{align}
		u_{k,b}&=\left(\sum_{m=1}^M\frac{\sigma^2}{P_T}\mathbf{v}_k^H\mathbf{v}_k+\mathbf{h}_{k,b}\mathbf{v}_m\mathbf{v}_m^H\mathbf{h}_{k,b}^H\right)^{-1}\mathbf{h}_{k,b}\mathbf{v}_{k}\\
		w_{k,b}&=\left(1-u_{k,b}^H\mathbf{h}_{k,b}\mathbf{v}_{k}\right)^{-1}\\
		\mathbf{v}_k&=\left(\sum_{m=1}^M\sum_{b=1}^B\alpha_{m,b}u_{m,b}w_{m,b}u_{m,b}^H\left(\frac{\sigma^2}{P_T}\mathbf{I}_{N_t}+\mathbf{h}_{m,b}^H\mathbf{h}_{m,b}\right)\right)^{-1}\left(\sum_{b=1}^B\mathbf{h}_{k,b}^H\alpha_{k,b}u_{k,b}w_{k,b}\right)\label{eq:multi RB v iterative formula}
	\end{align}	
	Therefore, the optimal precoding matrix $\mathbf{\v}_k^*$ should meet the form of (\ref{eq:multi RB v iterative formula}).
	Proposition \ref{prop: multi RB optimal solution structure} can be obtained by reorganizing (\ref{eq:multi RB v iterative formula}), which ends the proof.
	\newpage
	\bibliographystyle{ieeetr}
	\vspace{-6mm}
	\bibliography{BibDesk_File_v2}
	\vspace{-5mm}

	\end{document}